\documentclass[a4paper,reqno]{amsart}
\usepackage{amsfonts}
\usepackage{tikz}
\usepackage{tkz-graph}
\usepackage{accents}
\usepackage{array,calc} 
\usepackage{stmaryrd} 
\usepackage{pgf}

\usepackage{tikz}
\usepackage{tikz-cd}
\tikzcdset{column sep/normal=1.1cm}
\usetikzlibrary{calc}
\usepackage{color, mathtools,epsfig, graphicx}
\usetikzlibrary{positioning}
\usetikzlibrary{positioning, shapes, shadows, arrows}

\usepackage{amsmath,calc,graphicx}
\usepackage{url}
\usepackage[hidelinks]{hyperref}
\usepackage{amssymb}
\usepackage{amsthm}
\usepackage{multirow}
\usepackage{float}
\usepackage{enumerate}
\usepackage{enumitem}
\usepackage{amsrefs}
\usepackage{comment}
\usepackage{longtable}
\usepackage{todonotes}
\numberwithin{equation}{section}
\numberwithin{figure}{section}
\theoremstyle{plain}

\newtheorem{theorem}{Theorem}[section]
\newtheorem{lemma}[theorem]{Lemma}

\newtheorem{proposition}[theorem]{Proposition}
\newtheorem{definition}[theorem]{Definition}

\newtheorem{conjecture}[theorem]{Conjecture}

\theoremstyle{remark}
\newtheorem{remark}[theorem]{Remark}

\newtheorem*{lem*}{\textsc{Lemma}}
\newtheorem*{cor*}{\textsc{Corollary}}
\newtheorem*{exer*}{\textsc{Exercise}}
\newtheorem*{con*}{\textsc{Conjecture}}
\newtheorem*{thm*}{\textsc{Theorem}}

\newcommand{\beq}{\begin{equation}}
\newcommand{\eeq}{\end{equation}}

\makeatletter
\newcommand*\rel@kern[1]{\kern#1\dimexpr\macc@kerna}
\newcommand*\widebar[1]{%
  \begingroup
  \def\mathaccent##1##2{%
    \rel@kern{0.8}%
    \overline{\rel@kern{-0.8}\macc@nucleus\rel@kern{0.2}}%
    \rel@kern{-0.2}%
  }%
  \macc@depth\@ne
  \let\math@bgroup\@empty \let\math@egroup\macc@set@skewchar
  \mathsurround\z@ \frozen@everymath{\mathgroup\macc@group\relax}%
  \macc@set@skewchar\relax
  \let\mathaccentV\macc@nested@a
  \macc@nested@a\relax111{#1}%
  \endgroup
}
\makeatother

\newcommand\Pone{\textrm{P}_{\textrm{I}} }
\newcommand\Ptwo{\textrm{P}_{\textrm{II}} }

\newcommand\Pfour{\textrm{P}_{\textrm{IV}}}

\newcommand{\orcidauthorA}{0000-0001-7504-4444}
\newcommand{\orcidauthorB}{0000-0002-0461-7580}

\title{On integrals of non-autonomous dynamical systems in finite characteristic}
\date{}
\thanks{This research was supported by NISDRG Grant \#NI240100145.}
\author[Nalini Joshi]{Nalini Joshi}
\thanks{NJ's ORCID ID is \orcidauthorA.}
\address{School of Mathematics and Statistics F07, The University of Sydney, NSW 2006, Australia}
\author{Pieter Roffelsen}
\thanks{PR's ORCID ID is \orcidauthorB.}
\email{nalini.joshi@sydney.edu.au}
\email{pieter.roffelsen@sydney.edu.au}
\subjclass[2020]{39A13, 33E17,37P05,11G20}

\begin{document}
\begin{abstract}
We use a difference Lax form to construct simultaneous integrals of motion of the fourth Painlev\'e equation and the difference second Painlev\'e equation over fields with finite characteristic $p>0$. For $p\neq 3$, we show that the integrals can be normalised to be completely invariant under the corresponding extended affine Weyl group action. We show that components of reducible fibres of integrals correspond to reductions to Riccatti equations. We further describe a method to construct non-rational algebraic solutions in a given positive characteristic. We also discuss a projective reduction of the integrals.
\end{abstract}
\maketitle

\section{Introduction}

Given a field $\mathbf{k}$, consider the field of rational functions $\mathbf{k}(f,g,\alpha_1,\alpha_2,t)$ with the $\mathbf{k}$-linear derivation $D_t$ specified by
  \begin{equation}\label{eq:piv}
 \Pfour:\quad           \begin{cases}
        D_t f=f(2\hspace{0.3mm}g-f-t)-\alpha_1, &   D_t \alpha_1=0, \\
   D_t \hspace{0.3mm}g=g\hspace{0.3mm}(2f-\hspace{0.3mm}g+t)+\alpha_2, &   D_t \alpha_2=0, \\
        D_t \hspace{0.4mm}t\hspace{0.4mm}=1. & 
    \end{cases}
    \end{equation}
 This system of ODEs is also known as the fourth Painlev\'e equation  \cite{okamoto86}. 
The derivation $D_t$ commutes with the $\mathbf{k}$-linear field isomorphisms specified by
\begin{align*}
   && T_1:& &(f,g,\alpha_1,\alpha_2,t)\mapsto (\overline{f}^1,\overline{g}^1,\alpha_1+1,\alpha_2,t), &&\\
  &&  T_2:& &(f,g,\alpha_1,\alpha_2,t)\mapsto (\overline{f}^2,\overline{g}^2,\alpha_1,\alpha_2+1,t), &&
\end{align*}
where
\begin{equation}\label{eq:painleve_equation}
\textrm{d}\Ptwo:\quad\begin{cases}
\begin{aligned}
    \overline{f}^1&=\frac{(\alpha_1+\alpha_2)f}{\alpha_1-fg}-f+g-t-\frac{\alpha_1}{f}, & 
     \overline{g}^1&=\frac{(\alpha_2+fg)f}{\alpha_1-fg},\\
     \overline{g}^2&=\frac{(\alpha_1+\alpha_2)g}{\alpha_2+fg}+f-g+t+\frac{\alpha_2}{g}, &     \overline{f}^2&=\frac{(\alpha_1-fg)g}{\alpha_2+fg}.
\end{aligned}&
\end{cases}
\end{equation}
This set of equations defines a discrete dynamical system known as the  difference second Painlev\'e equation \cite{noumiyamadadiscrete}. We refer to \eqref{eq:piv} and \eqref{eq:painleve_equation} as $\Pfour$ and 
 $\textrm{d}\Ptwo$ over $\mathbf{k}$ respectively.

In this paper, we study $\Pfour$ and $\textrm{d}\Ptwo$ over fields $\mathbf{k}$ with finite characteristic $p>0$. Whereas, over the complex numbers, Painlev\'e equations only have transcendental integrals of motion \cites{gromaknointegrals,incebook,malgrange1,malgrange2,nishioka1988,umemura1997},  e.g. constructible through the Riemann-Hilbert correspondence \cites{chekhov2017,itsbook,painleverhbook,putsaito,JR2021,JR2023,R2024}, we show that the situation is drastically different in finite characteristic. We focus on $\Pfour$ and $d\Ptwo$, though our methodology extends to other Painlev\'e equations that have a difference Lax pair, and show that they admit simultaneous polynomial integrals of motion in finite characteristic. We provide a method to construct these integrals of motion explicitly based on a difference Lax form by Nakazono \cite{nakazono_lax} for $\Pfour$.

These results extend earlier results on a $q$-difference Painlev\'e equation obtained in \cite{JR2025}, where integrals of motion are construct for $q\Pone$ using a $q$-difference Lax form when $q$ has finite multiplicative order. Computational experiments in that paper suggest that fibres of the integrals of motion are generally elliptic curves and indeed it was checked for more than 200M orbits over finite fields that they satisfy an analogue of the Hasse bound.

In the current paper, we do not carry out a computational study of orbits and fibres, but instead relate geometric properties of fibres to aspects of solutions of $\Pfour$ and $d\Ptwo$. We show how components of reducible fibres of integrals of motion are related to reductions of $\Pfour$ to Riccati equations. We also construct non-rational algebraic solutions of $\Pfour$ and $d\Ptwo$, which correspond to singularities of fibres. The latter solutions have no apparent reflection in characteristic zero.

We further show that the integrals of motion reduce to integrals of the difference first Painlev\'e  equation ($d\Pone$) under a projective reduction.

Whilst our results apply whenever the field $\mathbf{k}$ has finite characteristic $p$, their most interesting applications are likely to lie in the setting of finite fields, $\mathbf{k}=\mathbb{F}_q$, $q=p^n$ where $p$ prime and $n\geq 1$ integer. In this context, they provide a stepping stone toward the largely unexplored arithmetic study of Painlev\'e equations.

\subsection{Overview paper}
Section \ref{sec:mainresults} describes the main results of the paper. In Section \ref{sec:integralmotanalysis}, the integrals of motion are analysed and the action of the affine Weyl symmetry group on them is computed. Section \ref{sec:fibres} is concerned with fibres of the integrals of motion and special solutions of $d\Ptwo$ and $\Pfour$. Projective reductions of the integrals of motion are discussed in Section \ref{sec:projred}. A conclusion is provided in Section \ref{sec:conclusion}, followed by three appendices that contain explicit formulas and proofs of some technical lemmas.

\subsection{Acknowledgements}
This research was funded by the Australian Government through the Office of National Intelligence grant \# NI240100145. We thank Tomas Lasic Latimer for fruitful discussions regarding trace identities in finite characteristic that led to the proof of Lemma \ref{lem:traceconstant}. We also like to thank Claude Viallet and John Roberts for interesting discussions related to the topic of the paper.

\section{Main Results}\label{sec:mainresults}

Recall that, when a field $\mathbf{k}$ has finite characteristic $p>0$, there is a canonical embedding $\mathbb{F}_p\hookrightarrow \mathbf{k}$ and we will consider $\mathbb{F}_p\subseteq \mathbf{k}$ as a subfield through this embedding.
Our first main result is the existence of a simultaneous integral of motion $\mathcal{I}_p$ of $\textrm{d}\Ptwo$ and $\Pfour$, over fields with characteristic $p>0$,  that is polynomial in all the relevant variables.

The construction of the integral $\mathcal{I}_p$ is based on the difference Lax form given by Nakazono \cite{nakazono_lax}. Of particular relevance is the following $2\times 2$ matrix polynomial in $z$,
\begin{gather}
\begin{aligned}
    A(z)=
    \begin{bmatrix}
        w & 0\\
        0 & 1
    \end{bmatrix}
    \begin{bmatrix}
        1 & 0\\
        f-t\, f\,g-t\,z & 1
    \end{bmatrix}
    \begin{bmatrix}
        g & 1\\
        z+\alpha_2 & 0
    \end{bmatrix}\hspace{2cm}\\
    \hspace{2cm}\cdot
    \begin{bmatrix}
        -f & 1\\
        z & 0
    \end{bmatrix}
    \begin{bmatrix}
        t\,z-g+t\,f\,g & 1\\
        z-\alpha_1 & 0
    \end{bmatrix}
    \begin{bmatrix}
        w^{-1} & 0\\
        0 & 1
    \end{bmatrix},
    \end{aligned}
\label{eq:matrixAdef}
\end{gather}
with coefficients in $\mathbb{Z}[f,g,\alpha_1,\alpha_2,t,w,w^{-1}]$. This is the coefficient matrix of the spectral equation of the Lax form in \cite{nakazono_lax}, after some rescaling. It contains a free scaling parameter $w$ whose dependence falls out in the construction of integrals of motion.

\begin{definition}\label{def:integralofmotion}
    For any prime $p\geq 2$, define the polynomial $\mathcal{I}_p\in \mathbb{F}_p[f,g,\alpha_1,\alpha_2,t]$ by
\begin{equation*}
    \mathcal{I}_p:=\operatorname{Tr}[A(p-1)\cdot A(p-2)\cdot\ldots\cdot A(1)\cdot A(0)],
\end{equation*}
in characteristic $p$, where $A(z)$ is the matrix polynomial defined in equation \eqref{eq:matrixAdef}.
\end{definition}

Our first main result is the following theorem, proven in Section \ref{sec:lax}.
\begin{theorem}\label{thm:main_theorem}
For any prime $p\geq 2$, the polynomial $\mathcal{I}_p$ is a simultaneous integral of motion for $\textrm{d}\Ptwo$ and $\Pfour$ over fields of characteristic $p$, that is,
\begin{equation*}
    \overline{\mathcal{I}}_p^1=\overline{\mathcal{I}}_p^2=\mathcal{I}_p,\qquad D_t \mathcal{I}_p=0,
\end{equation*}
of total degree $3p$ and bi-degree $(2p,2p)$ in $(f,g)$ with
\begin{equation*}
\mathcal{I}_p=-f^{2p}g^p+f^{p}g^{2p}+O(f,g),
\end{equation*}
for a polynomial $O(f,g)\in \mathbb{F}_p[f,g,\alpha_1,\alpha_2,t]$ of degree less than $2p$ in $f$ and degree less than $2p$ in $g$. 
\end{theorem}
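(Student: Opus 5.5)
Our approach is the standard monodromy argument for difference Lax pairs. Nakazono's Lax form for $\textrm{d}\Ptwo$ consists of the spectral equation $Y(z+1)=A(z)Y(z)$ together with deformation equations. After our rescaling, these take the form
\begin{equation*}
\overline{A}^i(z)=R_i(z+1)A(z)R_i(z)^{-1}\ (i=1,2),\qquad D_tA(z)=B(z+1)A(z)-A(z)B(z),
\end{equation*}
where $R_i(z)$ and $B(z)$ are rational in $z$ with coefficients in $\mathbb{Z}[f,g,\alpha_1,\alpha_2,t,w^{\pm1}]$ localised at the relevant denominators, for example $\alpha_1-fg$. The shifts in $\alpha_i$ may also shift $z$ in the factors of $A$; then $T_i$ acts on $z$ by a translation, which we absorb by relabelling the product. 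In characteristic $p$ we have $z+p=z$. Hence $M(z):=A(z+p-1)\cdots A(z+1)A(z)$ satisfies
\begin{equation*}
\overline{M}^i(z)=R_i(z+p)M(z)R_i(z)^{-1}=R_i(z)M(z)R_i(z)^{-1},
\end{equation*}
and $\operatorname{Tr}\overline{M}^i(z)=\operatorname{Tr}M(z)$. For the derivation, the telescoping sum gives $D_tM(z)=[B(z),M(z)]$, so $D_t\operatorname{Tr}M(z)=0$.

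It remains to see that $\operatorname{Tr}M(z)$ is independent of $z$, so that it equals $\mathcal{I}_p=\operatorname{Tr}M(0)$. Cyclicity of the trace gives $\operatorname{Tr}M(z+1)=\operatorname{Tr}M(z)$. We therefore have a polynomial in $z$ over a field of characteristic $p$ that is invariant under $z\mapsto z+1$; such a polynomial is a polynomial in $z^p-z$. We then need a degree bound: $\deg_z A\le 2$ (the leading term comes from the $t z$ entries), so $\deg_z \operatorname{Tr}M\le 2p$. The trace could therefore be $c_0+c_1(z^p-z)+c_2(z^p-z)^2$. This is the main obstacle, which I expect the paper handles in Lemma \ref{lem:traceconstant}. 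My first attempt would be to show directly that $c_1$ and $c_2$ are conjugation invariants with no $f,g$-dependence. The top coefficient of $A$ in $z$ is nilpotent or upper triangular with constant diagonal, so the coefficients of $z^{2p}$ and $z^p$ in the trace are computed from products of leading parts and involve only $t,\alpha_i$. Either way, $\operatorname{Tr}M(z)-\mathcal{I}_p$ is at worst a function of $(t,\alpha_1,\alpha_2)$ that is invariant under $T_i$ and $D_t$. Since $D_t t=1$, a $D_t$-constant polynomial in $t$ must be a polynomial in $t^p$; I would check directly that these coefficients indeed vanish or are such $p$-th powers. Alternatively, one can bypass the issue by working with $M(0)$ directly: the relation $\overline{M}^i(0)=R_i(0)M(0)R_i(0)^{-1}$ already holds, provided the gauge matrix $R_i(z)$ is regular at $z=0$, or after conjugating by a suitable shift. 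Independence from $w$ is immediate, because $w$ enters $A$ only through a constant diagonal conjugation.

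For the degree statement, track the top $(f,g)$-degree part of each $A(z)$. Ignoring lower-order terms, the middle factors contribute the entries $fg\,t$, $g$ and $-f$. Expanding the $2\times2$ product shows that the leading form of $A(z)$, in bidegree $(2,2)$ with total degree $3$, is independent of $z$. It is a rank-one matrix $A_0$ whose trace is $-f^2g+fg^2$ (possibly after the $t$-terms cancel). The product of $p$ copies is then dominated by $A_0^p$, and $\operatorname{Tr}A_0^p=(\operatorname{Tr}A_0)^p=(fg^2-f^2g)^p$ because $A_0$ has rank one. Expanding in characteristic $p$ gives $f^pg^{2p}-f^{2p}g^p$, which is the stated leading term. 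Cross terms lower the degree in $f$ or in $g$, which yields the bidegree $(2p,2p)$ bound on the remainder $O(f,g)$. I would verify the rank-one leading-form computation explicitly as the key routine step.
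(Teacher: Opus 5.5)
Your integrality argument is the paper's: the monodromy product is conjugated by the Schlesinger matrices under $T_j$, and $D_t$ acts on it by a commutator. Drop the aim of proving that $\operatorname{Tr}M(z)$ is independent of $z$, though.
\begin{itemize}
\item It is false: in fact $\operatorname{Tr}M(z)=\mathcal{I}_p+t^p(z^p-z)$ for odd $p$.
\item It is unnecessary. The variable $z$ is inert under $T_j$ and $D_t$, so invariance of $\operatorname{Tr}M(z)$ holds coefficientwise in $z$, in particular for the constant term $\operatorname{Tr}M(0)=\mathcal{I}_p$. No regularity of the gauge matrices at $z=0$ is needed.
\item Also, $\deg_z A=3$, not $2$, since the $(2,1)$ entry contains $-w^{-1}t^2z^3$.
\end{itemize}

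The genuine gap is the degree count. The $z$-independent, rank-one leading form of total degree $3$ with trace $fg^2-f^2g$ that you invoke does not exist for $A(z)$ as written. Its entries have top terms $-tf^2g^2$, $-wfg$, $w^{-1}t^2f^3g^3$ and $tf^2g^2$. Multiplying the leading parts of the factors, even after the best diagonal rescaling $w=\hat w fg$ with $f=c_1\lambda$, $g=c_2\lambda$, gives $A(z)=\lambda^4U_0+O(\lambda^3)$ with $U_0$ rank one but nilpotent and traceless. So ``$A_0^p$ dominates'' only shows that the $\lambda^{4p}$ coefficient vanishes. The trace $fg^2-f^2g$ of $A$ lives at subleading order. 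Reaching degree $3p$ requires showing that the coefficients of $\lambda^{4p},\dots,\lambda^{3p+1}$ all cancel, which is exactly what the paper's identities for $U_0,U_1$ and its combinatorial argument achieve.

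Your idea can be rescued, but only with a non-diagonal gauge you have not supplied. Set $w=1$ and conjugate by the $z$-independent matrix $\begin{bmatrix}1&0\\-tfg&1\end{bmatrix}$, whose first column $(1,-tfg)^T$ is annihilated by the top-degree part of $A$. This removes the degree-$6$ and degree-$5$ parts of the $(2,1)$ entry and the degree-$4$ diagonal parts. The remaining top terms are $fg^2,\,-fg,\,f^2g^2,\,-f^2g$. After a further conjugation by $\operatorname{diag}(1,\lambda)$, the leading part is $\lambda^3V$ with $V$ independent of $z$, of rank one, and $\operatorname{Tr}V=c_1c_2^2-c_1^2c_2$. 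This yields total degree $3p$ with top part $(fg^2-f^2g)^p$, uniformly in $p$ including $p=2$, and is a shorter route than the paper's.

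Even with that repair, the bidegree claim does not follow from total-degree information: nothing excludes a term like $f^{2p+1}$ of lower total degree. ``Cross terms lower the degree in $f$ or in $g$'' is not an argument, since for fixed $w$ the $(2,1)$ entry already has $f$-degree $3$. You need separate one-variable leading-order analyses, as in the paper. With $w=\hat w f$ one has $A(z)=f^2g\,U+O(f)$ with $U^2=-U$ and $\operatorname{Tr}U=-1$, so $\operatorname{Tr}M=-(f^2g)^p+O(f^{2p-1})$. The same argument applies symmetrically in $g$.
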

In Appendix \ref{appendix:explicitformulas}, explicit formulas for the first couple of integrals of motion are given.

To describe our second main result, we recall that there is a $\mathbf{k}$-linear action of the extended affine Weyl group 
$\widetilde W(A_2^{(1)})$ on     $\mathbf{k}(f,g,\alpha_1,\alpha_2,t)$ that commutes with the derivation $D_t$ \cite{noumibook,noumiyamada99}. Realising $\widetilde W(A_2^{(1)})$ through the generators $\{s_0,s_1,s_2,\pi\}$ with relations
\begin{equation*}
    s_k^2=1,\quad \pi^3=1,\quad \pi s_{k+1}=s_k\pi,\quad (s_k s_{k+1})^3=1,
\end{equation*}
for $k\in\mathbb{Z}/{3\mathbb{Z}}$,  this action is defined in Table \ref{table:symmetryactions}. The discrete dynamics in \eqref{eq:painleve_equation} then correspond to the translational elements
\begin{equation*}
    T_1=\pi^2 s_2 s_1,\qquad
    T_2=\pi s_1 s_2.
\end{equation*}

\begin{table}
\caption{Actions of the generators of $\widetilde W(A_2^{(1)})$ on $t,\alpha_1,\alpha_2,f,g$.}\label{table:symmetryactions}
\begin{tabular}{|c||c|cc|cc|}
\hline
&$t$&$\alpha_1$&$\alpha_2$&$f$&$g$\\
\hline
\rule{0pt}{4ex}$s_0$&$t$&$1-\alpha_2$&$1-\alpha_1$&$\displaystyle f+\frac{\alpha_1+\alpha_2-1}{f-g+t}$&$\displaystyle g+\frac{\alpha_1+\alpha_2-1}{f-g+t}$\\[2ex]\hline
\rule{0pt}{4ex}$s_1$&$t$&$-\alpha_1$&$\alpha_2+\alpha_1$&$f$&$\displaystyle g-\frac{\alpha_1}{f}$\\[2ex]\hline
\rule{0pt}{4ex}$s_2$&$t$&$\alpha_1+\alpha_2$&$-\alpha_2$&$\displaystyle f+\frac{\alpha_2}{g}$&$g$\\[2ex]
\hline
\rule{0pt}{4ex}$\pi$&$t$&$\alpha_2$&$1-\alpha_1-\alpha_2$&$-g$&$f-g+t$\\[2ex]\hline
\end{tabular}
\end{table}

 For each integral of motion $\mathcal{I}_p$, for primes $p\neq 3$, we define an associated integral of motion
\begin{equation}\label{eq:normalisation}
    \widetilde{\mathcal{I}}_p=\begin{cases}
        \mathcal{I}_p+\tfrac{1}{3}t^p(\alpha_1^p-\alpha_1-\alpha_2^p+\alpha_2), & \text{if $p\geq 5$,}\\
        \mathcal{I}_2+(1+t^2)(\alpha_1^2+\alpha_1+\alpha_2^2+\alpha_2) & \text{if $p=2$.}
    \end{cases}
\end{equation}
Then our second main result is the following theorem, proven in Section \ref{sec:sym}.
\begin{theorem}\label{thm:invariance}
    For $p\neq 3$, the associated integral of motion $\widetilde{\mathcal{I}}_p$ is completely invariant under the extended affine Weyl group action, that is,
\begin{equation*}
   w(\widetilde{\mathcal{I}}_p)=\widetilde{\mathcal{I}}_p\qquad (w\in \widetilde W(A_2^{(1)})).
\end{equation*}
\end{theorem}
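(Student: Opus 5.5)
The plan is to reduce to generators and use the Lax form. Since $\pi s_{k+1}=s_k\pi$, we have $s_2=\pi^2 s_1\pi$ and $s_0=\pi s_1\pi^2$, so $\widetilde W(A_2^{(1)})$ is generated by $s_1$ and $\pi$. It therefore suffices to prove $s_1(\widetilde{\mathcal I}_p)=\widetilde{\mathcal I}_p$ and $\pi(\widetilde{\mathcal I}_p)=\widetilde{\mathcal I}_p$.

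First I will dispose of the correction term in \eqref{eq:normalisation}. Write $c_p(\alpha_1,\alpha_2,t)$ for it. Using $(\alpha_1+\alpha_2)^p=\alpha_1^p+\alpha_2^p$ in characteristic $p$, a direct check gives
\[
s_1(c_p)-c_p=-\tfrac{2}{3}t^p(\alpha_1^p-\alpha_1), \qquad \pi(c_p)-c_p=-t^p(\alpha_1^p-\alpha_1)
\]
for $p\ge5$, and analogous expressions for $p=2$. (For the second identity note $(1-\alpha_1-\alpha_2)^p-(1-\alpha_1-\alpha_2)=-(\alpha_1^p-\alpha_1)-(\alpha_2^p-\alpha_2)$.) The theorem is then equivalent to the two defect identities
\[
s_1(\mathcal I_p)-\mathcal I_p=\tfrac23 t^p(\alpha_1^p-\alpha_1),\qquad \pi(\mathcal I_p)-\mathcal I_p=t^p(\alpha_1^p-\alpha_1).
\]
The division by $3$ is exactly why $p\neq3$ is needed.

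To obtain these, I will look for gauge relations for the Lax matrix. The aim is a matrix $R_w(z)$, rational in $f,g$ and polynomial in $z$, and a shift $\delta_w\in\{0,\pm\alpha_1,\dots\}$ with
\[
w(A(z))=R_w(z+1)\,A(z+\delta_w)\,R_w(z)^{-1}\quad\text{or}\quad w(A(z))=R_w(z)\,A(z+\delta_w)\,R_w(z)^{-1}.
\]
For $s_1$ this should come from the factor $\begin{bmatrix}-f&1\\ z&0\end{bmatrix}$ together with its neighbours containing $\alpha_1,\alpha_2$. Indeed, $g\mapsto g-\alpha_1/f$ and $\alpha_2\mapsto\alpha_2+\alpha_1$ can be absorbed by commuting an elementary lower-triangular matrix through the product, so that $s_1$ permutes the roles of $z-\alpha_1$ and $z+\alpha_2$. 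For $\pi$, the cyclic structure of the factorisation ($\pi$ rotates $f\mapsto -g$, $g\mapsto f-g+t$) suggests that $\pi(A)$ is conjugate to a cyclic rotation of the factors, up to a shift of $z$.

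Taking the ordered product over $z\in\mathbb F_p$ and the trace, the conjugations telescope. What remains is a comparison between the product over $z\in\mathbb F_p$ and over $z+\delta_w$ with $\delta_w\notin\mathbb F_p$. This is the step I expect to be the main obstacle. The trace of $\prod_{z\in\mathbb F_p}A(z+\delta)$ is not in general independent of $\delta$. Its dependence should enter only through $\prod_{z\in\mathbb F_p}(z+\delta)=\delta^p-\delta$ and similar Artin–Schreier expressions, which is the trace identity referred to in Lemma \ref{lem:traceconstant}.

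My first attempt at this step is a different route. The difference $w(\mathcal I_p)-\mathcal I_p$ is itself an integral of motion for $T_1,T_2$ and $D_t$, because $w$ normalises the translation subgroup and commutes with $D_t$. I will show that it is a polynomial with no dependence on $f,g$. The leading part $-f^{2p}g^p+f^pg^{2p}$ from Theorem \ref{thm:main_theorem} is preserved by $s_1$ and $\pi$, and the gauge argument shows the difference is a trace-boundary term independent of $f,g$. Once that is known, the exact value can be found by specialising $f,g$ to convenient values, for instance a Riccati-type locus such as $f=0$, $\alpha_1=0$ extended by polynomiality. There the product of the matrices $A(z)$ becomes triangular and its trace can be computed in closed form, yielding $t^p(\alpha_1^p-\alpha_1)$ times the stated constants. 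The case $p=2$ will be checked directly from the explicit $\mathcal I_2$ in Appendix \ref{appendix:explicitformulas}.
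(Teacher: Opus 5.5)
There is a genuine gap, in two places. First, the bookkeeping for the correction term is wrong, so the two defect identities you set out to prove are false. Write $a=\alpha_1^p-\alpha_1$ and $b=\alpha_2^p-\alpha_2$. Then $s_1$ acts by $(a,b)\mapsto(-a,a+b)$ and $\pi$ acts by $(a,b)\mapsto(b,-a-b)$. Hence for $c_p=\tfrac13t^p(a-b)$ one gets $s_1(c_p)-c_p=-t^pa$ and $\pi(c_p)-c_p=t^pb$. Your value $-\tfrac23t^pa$ comes from forgetting that $s_1$ also moves $\alpha_2$. The identities actually needed are $s_1(\mathcal I_p)=\mathcal I_p+t^p(\alpha_1^p-\alpha_1)$ and $\pi(\mathcal I_p)=\mathcal I_p-t^p(\alpha_2^p-\alpha_2)$, which is exactly Theorem~\ref{thm:sym}. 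Yours cannot hold: evaluating the explicit $\mathcal I_5$ of Appendix~\ref{appendix:explicitformulas} at $f=g=0$ gives $\pi(\mathcal I_5)-\mathcal I_5=-t^5(\alpha_2^5-\alpha_2)$ there, not $t^5(\alpha_1^5-\alpha_1)$.

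More importantly, the step you flag as the main obstacle is the actual content of the proof, and your plan for it does not work. Consider gauge relations of the kind you describe; the paper uses $s_1(A(z))=R_{s_1}A(z+\alpha_1)R_{s_1}^{-1}$ and $T_1\pi(A(z))=R_\pi A(z-\alpha_2)R_\pi^{-1}$, with constant lower-triangular $R$. Together with $\chi_p(z+1)=\chi_p(z)$ and the degree bound of Lemma~\ref{lem:chi_form}, these only give $\chi_p(z)=\mathcal I_p+i_p(z^p-z)$. Hence $s_1(\mathcal I_p)-\mathcal I_p=i_p(\alpha_1^p-\alpha_1)$, where $i_p$ is an a priori unknown integral of motion that may depend on $f,g,\alpha_1,\alpha_2$. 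Everything hinges on Proposition~\ref{prop:chi_form_ref}, namely $i_p=t^p$ (and $1+t^2$ for $p=2$).

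Lemma~\ref{lem:traceconstant} does not apply directly to $A$, which is cubic in $z$. The paper determines $i_p$ in three steps:
\begin{enumerate}
\item It applies Lemma~\ref{lem:traceconstant} only at $t=0$, where $A$ becomes quadratic with $B_2^2=0$ and $B_1^2=hI$. This gives $t\mid i_p$.
\item It obtains $i_p=t^p+O(t^{p-1})$ from the factorisation $A(z)=(z-fg)[\cdots]$ as $t\to\infty$, using the identity $(x)^{(p)}=x^p-x$.
\item It uses $D_t i_p=0$ to kill the intermediate powers of $t$.
\end{enumerate}

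Your substitute argument does not do this job. It assumes the $f,g$-independence that is exactly what is at stake; agreement of the top-degree parts only lowers the degree of the defect. Specialising to $\alpha_1=0$ cannot recover a defect proportional to $\alpha_1^p-\alpha_1$, since that factor vanishes identically there, and no polynomiality argument restores the $\alpha_1$-dependence from that slice. Also, $A(z)|_{f=0}$ is not triangular: its $(1,2)$ entry is $wz$. Once $i_p=t^p$ is in hand, the theorem follows as in the paper, by checking that the correction term has the opposite defect under each generator.
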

\begin{remark}
    The action of the extended affine Weyl group action on $\mathcal{I}_3$ is given in Theorem \ref{thm:sym}. Whether it can be normalised so that it is invariant under the extended affine Weyl group action, like the others, remains to be investigated.
\end{remark}

The equations $d\Ptwo$ and $\Pfour$ can be simultaneously regularised on a rational surface known as the initial value space \cites{okamoto1979,s:01}.
Our third main result is Theorem \ref{thm:morphism}, which shows that the integrals of motions define regular maps on the initial value space of $d\Ptwo$ and $\Pfour$. 

In Section \ref{sec:fibre_reducible}, we explain how reductions of $\Pfour$ to Riccati equations for special parameter values lead to reducible fibres of the integrals of motion. We further conjecture that this is the only way reducible fibres may arise, see Conjecture \ref{conj:irred}. In Section \ref{sec:ratsol} we discuss the particular case of rational solutions and in Section \ref{subsec:singularities} we show that singularities on fibres correspond to non-rational algebraic solutions of $d\Ptwo$ and $\Pfour$.

The last main result of the paper is that the integrals reduce to integrals of motion of $d\Pone$ under a projective reduction from $d\Ptwo$ to $d\Pone$, as proven in Section \ref{sec:projred}.

\section{Analysis of integrals of motion}\label{sec:integralmotanalysis}




In this section we show that the polynomials $\mathcal{I}_p$, for prime $p$, as defined in Definition \ref{def:integralofmotion}, are integrals of motion of $d\Ptwo$ and $\Pfour$ and work out how the affine Weyl symmetry group acts on them, in particular
proving Theorems \ref{thm:main_theorem} and \ref{thm:invariance}.

\subsection{Lax form}\label{sec:lax}
The construction of the integrals of motions is based on a
coupled set of systems of difference and differential equations that form Lax pairs for $\Pfour$ and $d\Ptwo$, where, crucially, the spectral equation is a difference equation. We took this Lax form from \cite{nakazono_lax}. It reads as follows after some rescaling,
\begin{subequations}\label{eq:piv_lax}
    \begin{align}
        Y(z+1)&=A(z)Y(z),\label{eq:piv_laxA}\\\
        \overline{Y}^j(z)&=B^{(j)}(z)Y(z)\qquad (j=1,2), \label{eq:piv_laxB}\\
        D_t Y(z)&=C(z)Y(z) \label{eq:piv_laxC}
    \end{align}
    \end{subequations}
where $A(z)$ is the $2\times 2$ matrix polynomial defined in \eqref{eq:matrixAdef}. Direct computation yields that it is a matrix polynomial of degree three,
\begin{equation}
\label{eq:coefficient_matrix}
    A(z)=A_0+z A_1+z^2 A_2+z^3 A_3,   
\end{equation}
with coefficients given by
\begin{equation*}
    A_3=\begin{bmatrix}
        0 & 0\\
        -w^{-1}t^2 & 0
    \end{bmatrix}, \quad
    A_2=\begin{bmatrix}
        t & 0\\
        w^{-1}(1-t^2fg) & -t
    \end{bmatrix},\quad
    A_1=\begin{bmatrix}
        0 & w\\
        w^{-1}h & 0
    \end{bmatrix},
\end{equation*}
and
\begin{equation*}
        A_0=\begin{bmatrix}
        -\alpha_1g+fg^2-t f^2g^2 & -w fg\\
        w^{-1}(fg\,h-\alpha_1\alpha_2) & -\alpha_2f-f^2g+t f^2g^2
    \end{bmatrix}, 
\end{equation*}
where
\begin{equation*}
    h:=fg(t\,g-1)(t f-1)+\alpha_1(t\,g-1)-\alpha_2(t f-1).
\end{equation*}
Explicit expressions for the coefficient matrices $B^{(1)}(z)$, $B^{(2)}(z)$ and $C(z)$ are given in Appendix \ref{sec:laxcoef}, but they will not be used in this paper.

For fixed $t$ and $\alpha$, the coefficient matrix $A(z)$ is characterised by the following two analytic properties.
\begin{enumerate}[label=(\roman*)]
    \item $A(z)$ is a $2\times 2$ matrix polynomial with
   \begin{equation}\label{eq:coefmatrixnormalisation}
    \begin{aligned}
        A_{11}(z)&=t\,z^2+\mathcal{O}(1), &
        A_{12}(z)&=\mathcal{O}(z),\\
        A_{21}(z)&=\mathcal{O}(z^3), &
        A_{22}(z)&=-t\,z^2+\mathcal{O}(1),
    \end{aligned}
   \end{equation}
    as $z\rightarrow \infty$.
    \item The determinant of $A(z)$ equals
\begin{equation}\label{eq:coefmatrixdet}
    |A(z)|=-z(z-\alpha_1)(z+\alpha_2).
\end{equation}
\end{enumerate}

For $j=1,2$, the transformation
\begin{equation}\label{eq:translation_action_coef}
    A(z)\mapsto \overline{A}^j(z)=B^{(j)}(z+1)A(z)B^{(j)}(z)^{-1}
\end{equation}
shifts one of the zeros of the determinant, namely $\alpha_j\mapsto\alpha_j+1$, and the resulting evolution of $(f,g)$ is given precisely by the translation $T_j$, with $\overline{w}^{j}=w$.

Similarly, compatibility of \eqref{eq:piv_laxA} and \eqref{eq:piv_laxC} yields
\begin{equation}\label{eq:coefdif}
    D_t A(z)=C(z+1)A(z)-A(z)C(z),
\end{equation}
which is equivalent to \eqref{eq:piv} and $D_t w=0$.

In \cite{nakazono_lax} only the deformation equation \eqref{eq:piv_laxC} is given. The Schlesinger transformations \eqref{eq:piv_laxB} were obtained by direct computation, i.e. by solving, see equation \eqref{eq:translation_action_coef},
\begin{equation*}
    \overline{A}^j(z)B^{(j)}(z)=B^{(j)}(z+1)A(z),
\end{equation*}
for $j=1,2$.

\begin{remark}\label{rem:lax equivalent}
There is also a scalar Lax form available in the literature \cite{KNY2017}*{8.5.20(ii)}. Denoting by $y(z)$ the dependent variable of the spectral equation of this Lax form, it can be related
to $Y(z)$ in equation \eqref{eq:piv_laxA} by
    \begin{equation*}
        Y(z)=G(z)\begin{bmatrix}
            \psi(z)\\
            \psi(z+1)
        \end{bmatrix},\quad \psi(z):=\Gamma(z)\Gamma(z-\alpha_1) y(z),
    \end{equation*}
    where
    \begin{equation*}
        G(z)=\begin{bmatrix}
            1 & 0\\
            \frac{fg^2(tf-1)+\alpha_1 g-tz^2}{w(z-f g)} & \frac{1}{w(z-f g)}
        \end{bmatrix},
    \end{equation*}
    where we note  the correspondence  $(f^{\text{KNY}},g^{\text{KNY}})=(f,fg)$ between the $(f,g)$ coordinates used in \cite{KNY2017} and ours. Matching up the deformation equations \eqref{eq:piv_laxB} and \eqref{eq:piv_laxC} with those in \cite{KNY2017} may require an additional $z$-independent scaling.
\end{remark}

\subsection{Proof of Theorem \ref{thm:main_theorem}}
Fix a prime $p\geq 2$. To prove theorem \ref{thm:main_theorem}, it will be useful to consider the following matrix polynomial,
\begin{equation*}
    M_p(z)=A(z+p-1)\cdot A(z+p-2)\cdot\ldots\cdot A(z+1)\cdot A(z),
\end{equation*}
in characteristic $p$.
Note that this matrix polynomial satisfies
\begin{equation*}
    M_p(z+1)=A(z)M_p(z)A(z)^{-1}.
\end{equation*}
Furthermore, by equations \eqref{eq:translation_action_coef} and \eqref{eq:coefdif},
\begin{align}
    \overline{M}_p^j(z)&=B^{(j)}(z)M_p(z))B^{(j)}(z)^{-1},\\
    D_t M_p(z)&=C(z)M_p(z)-M_p(z)C(z).
\end{align}
This means that the trace
\begin{equation}\label{eq:defi_chi}
    \chi_p(z):=\operatorname{Tr}[M_p(z)], 
\end{equation}
is a polynomial in $z$, with coefficients in $\mathbb{F}_p[f,g,\alpha_1,\alpha_2,t]$,  that satisfies
\begin{subequations}
\begin{align}
    \chi_p(z+1)&=\chi_p(z),\\
    \overline{\chi}_p^j(z)&=\chi_p(z),\label{eq:shift}\\
    D_t \chi_p(z)&=0.\label{eq:Dt}
\end{align}
\end{subequations}
In other words, its coefficients are simultaneously integrals of motion of $\Pfour$ and $d\Ptwo$. There are only three relevant coefficients, as the following lemma shows.
\begin{lemma}\label{lem:chi_form}
The polynomial $ \chi_p(z)$ takes the form
\begin{equation*}
    \chi_p(z)=i_0+i_1 z+i_p z^p,
\end{equation*}
for some $i_0,i_1,i_p\in \mathbb{F}_p[f,g,\alpha_1,\alpha_2,t]$, with 
\begin{equation*}
    i_0=I_p,\qquad i_1+i_p=0,
\end{equation*}
where $I_p$ is defined in Definition \ref{def:integralofmotion}.
\end{lemma}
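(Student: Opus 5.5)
The plan is to combine the periodicity $\chi_p(z+1)=\chi_p(z)$, already established above, with a degree bound $\deg_z\chi_p\le 2p-1$. Together these force $\chi_p$ to be an affine function of $z^p-z$.

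\textbf{Step 1 (periodic polynomials).} Let $R=\mathbb{F}_p[f,g,\alpha_1,\alpha_2,t]$, an integral domain of characteristic $p$. I claim that any $P\in R[z]$ with $P(z+1)=P(z)$ lies in $R[u]$, where $u:=z^p-z$. I argue by induction on $\deg P$.
\begin{itemize}
\item Periodicity gives $P(k)=P(0)$ for $k=0,1,\dots,p-1$.
\item Hence $P(z)-P(0)$ vanishes at these $p$ points, which are distinct in $\mathbb{F}_p\subseteq R$.
\item Since $\prod_{k\in\mathbb{F}_p}(z-k)=z^p-z$ and $R$ is a domain, we get $P(z)-P(0)=u\,Q(z)$ for some $Q\in R[z]$.
\item As $u(z+1)=u(z)$, the polynomial $Q$ is again periodic, and $\deg Q=\deg P-p$, so the induction applies.
\end{itemize}
In particular, a periodic $P$ of degree $<2p$ has the form $P=a+b\,(z^p-z)$ with $a,b\in R$.

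\textbf{Step 2 (degree bound, the main step).} A priori $\deg_z\chi_p\le 3p$, since each factor has degree $3$. To lower this, conjugate by $D(z)=\operatorname{diag}(z,1)$ and set $\hat A(z):=D(z)A(z)D(z)^{-1}$.
\begin{itemize}
\item From \eqref{eq:coefmatrixnormalisation} and the explicit coefficients $A_1,A_2,A_3$, every entry of $\hat A(z)$ is $O(z^2)$.
\item More precisely, $\hat A(z)=z^2N+O(z)$ with
\[
N=\begin{bmatrix} t & w\\ -w^{-1}t^2 & -t\end{bmatrix}.
\]
\item $N$ has trace $0$ and determinant $-t^2+t^2=0$, so $N^2=0$.
\end{itemize}
Next rewrite the product. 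We have
\[
M_p(z)=D(z+p)^{-1}\Big(\prod_{k=p-1}^{0}\hat A(z+k)\,E_k(z)\Big)D(z),\qquad E_k(z)=D(z+k+1)D(z+k)^{-1}\ \text{ for } k\le p-2,
\]
with $E_{p-1}$ absorbed appropriately. The point is that $D(z+p)=D(z)$ in characteristic $p$, so the outer conjugation drops out under the trace.

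The correction factors are $E_k=\operatorname{diag}\big((z+k+1)/(z+k),1\big)=I+O(1/z)$. These are rational in $z$, so I will work in $R(z)$ or with Laurent expansions at $z=\infty$. Then
\[
\chi_p(z)=z^{2p}\operatorname{Tr}\big(N^p\big)+O(z^{2p-1})=O(z^{2p-1}),
\]
because $N^p=0$. Since $\chi_p$ is a polynomial, this gives $\deg_z\chi_p\le 2p-1$.

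I expect this bookkeeping to be the main obstacle. It requires controlling the $O(1/z)$ corrections from the $E_k$ uniformly and checking that the telescoping is exact modulo $p$.

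\textbf{Step 3 (conclusion).} By Steps 1 and 2,
\[
\chi_p(z)=a+b\,(z^p-z),\qquad a,b\in R.
\]
Setting $i_0=a$, $i_p=b$ and $i_1=-b$ gives the stated form, together with $i_1+i_p=0$.

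Finally, evaluating at $z=0$ gives
\[
i_0=\chi_p(0)=\operatorname{Tr}\big[A(p-1)\cdots A(0)\big]=\mathcal{I}_p,
\]
by Definition \ref{def:integralofmotion}.
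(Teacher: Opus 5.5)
Your proof is correct and follows the paper's argument: a bound $\deg_z\chi_p\le 2p-1$ combined with $\chi_p(z+1)=\chi_p(z)$ and evaluation at $z=0$. You additionally prove that a periodic polynomial of degree $<2p$ has the form $a+b(z^p-z)$, which the paper only asserts. Your conjugation by $\operatorname{diag}(z,1)$ together with $N^2=0$ is a repackaging of the paper's entrywise induction on the orders of $M_n(z)$ at $z=\infty$, which amounts to $\operatorname{diag}(z,1)M_n(z)\operatorname{diag}(z,1)^{-1}=O(z^{2n-1})$ for $n\ge 2$. The step you flagged as the main obstacle is harmless: every correction factor is $I+O(z^{-1})$ in $R[w^{\pm1}]((z^{-1}))$, even $D(z)D(z+p-1)^{-1}$ in any characteristic, so no exact telescoping mod $p$ is needed; your $E_k$ should stand to the left of $\hat A(z+k)$ rather than the right, but this does not affect the estimate.
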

\begin{proof}
To prove the lemma, we estimate the degree of $\chi_p(z)$ with respect to $z$.

For any $n\geq 1$, denote
\begin{equation*}
    M_n(z)=A(z+n-1)\cdot A(z+n-2)\cdot\ldots\cdot A(z+1)\cdot A(z).
\end{equation*}
Using
\begin{equation*}
    A(z)=\begin{bmatrix}
        t z^2+\mathcal{O}(1) & wz+\mathcal{O}(1)\\
        -w^{-1}t^2 z^3+\mathcal{O}(z^2) & -t z^2+\mathcal{O}(1)
    \end{bmatrix}\qquad (z\rightarrow \infty),
\end{equation*}
and induction on $n$, we obtain
\begin{equation*}
    M_n(z)=\begin{bmatrix}
        \mathcal{O}(z^{2n-1}) & \mathcal{O}(z^{2n-2})\\
        \mathcal{O}(z^{2n}) & \mathcal{O}(z^{2n-1})
    \end{bmatrix}\qquad (z\rightarrow \infty),
\end{equation*}
for $n\geq 2$.
It follows that $\chi_p(z)=\operatorname{Tr}M_p(z)$ has degree at most $2p-1$ in $z$.

Using only this fact and the identity $\chi_p(z+1)=\chi_p(z)$, it follows that $\chi_p(z)$ must take the form
\begin{equation*}
    \chi_p(z)=i_0+i_1 z+i_p z^p,
\end{equation*}
for some $i_0,i_1,i_p\in \mathbb{F}_p[f,g,\alpha_1,\alpha_2,t]$, with $i_1+i_p=0$. Finally, the identity $i_i=I_p$ follows directly from Definition \ref{def:integralofmotion} and the lemma follows.
\end{proof}

We now prove Theorem \ref{thm:main_theorem}.
\begin{proof}[Proof of Theorem \ref{thm:main_theorem}]
By equations \eqref{eq:shift} and \eqref{eq:Dt}, all the coefficients of the polynomial $\chi_p(z)$ are integrals of motion of $d\Ptwo$ and $\Pfour$. Since, in the notation of Lemma \ref{lem:chi_form}, 
\begin{equation*}
    \mathcal{I}_p=\chi_p(0)=i_0\in \mathbb{F}_p[f,g,\alpha_1,\alpha_2,t],
\end{equation*}
it follows that $\mathcal{I}_p$ is a polynomial simultaneous integral of motion of $d\Ptwo$ and $\Pfour$.

To prove the theorem, it remains to estimate the degrees of $\mathcal{I}_p$ in $f$ and $g$. We first consider the degree with respect to $f$.
By rescaling $w=\widehat{w}f$, which does not affect the trace $\chi_p(z)$, we have that the coefficient matrix satisfies
\begin{equation*}
    A(z)=f^2 g\, U+\mathcal{O}(f),\qquad U=\begin{bmatrix}
        -g t & -\widehat{w}\\
       \widehat{w}^{-1} gt(gt-1) & gt-1
    \end{bmatrix},
\end{equation*}
as $f\rightarrow \infty$. Now, the matrix $U$ satisfies the identity $U^2=-U$, so that
\begin{align*}
    \chi_p(z)=& \operatorname{Tr}\left[(f^2 g)^p U^p+\mathcal{O}(f^{2p-1})\right]\\
    =&(-1)^{p+1}(f^2 g)^p \operatorname{Tr} U+\mathcal{O}(f^{2p-1})\\
    =&-(f^2g)^p+\mathcal{O}(f^{2p-1}),
\end{align*}
 and consequently
\begin{equation*}
    \mathcal{I}_p=\chi_p(0)=-(f^2g)^p+\mathcal{O}(f^{2p-1)}),
\end{equation*}
as $f\rightarrow \infty$. Completely analogously,  we derive the highest degree term in $g$.

What is left, is to show that the total degree of $\mathcal{I}_p$ in $(f,g)$ is $3p$. We already know that the total degree is at least $3p$. For the case $p=2$, the assertion follows from the explicit formula for $\mathcal{I}_2$ in Appendix \ref{appendix:explicitformulas}. So, from here on, we assume that $p$ is an odd prime.

To estimate the total degree from above, we rescale $w=\widehat{w}fg$ and write
\begin{equation*}
    f=c_1\lambda,\quad g=c_2\lambda.
\end{equation*}
Substitution into the coefficient matrix then gives
\begin{equation*}
    A(z)=\lambda^4(U_0+\lambda^{-1}U_1+\lambda^{-2} U_2(z)+\lambda^{-3} U_3(z)+\ldots),
\end{equation*}
where
\begin{align*}
    U_0&=c_1^2 c_2^2\begin{bmatrix}
        -t & -\widehat{w}\\
      \widehat{w}^{-1}t^2 & t  
    \end{bmatrix},\quad
    U_1=c_1 c_2\begin{bmatrix}
        c_2 & 0\\
      -\widehat{w}^{-1}(c_1+c_2)t & -c_1  
    \end{bmatrix},\\
    U_2(z)&=c_1c_2\begin{bmatrix}
        0 & \widehat{w}z\\
        \widehat{w}^{-1}(1+t^2z) & 0
    \end{bmatrix}.
\end{align*}
We note that these matrices satisfy the following identities,
\begin{subequations}
\begin{align}
    &\operatorname{Tr}U_0=0,\label{eq:id1}\\
    &\operatorname{Tr}U_0U_1=0,\label{eq:id2}\\
    &U_0^2=0,\label{eq:id3}\\
    &U_0 U_1=-\frac{c_1}{c_2}U_1U_0,\label{eq:id4}\\
    &U_1^2=c_1c_2(c_2-c_1)U_1+c_1^3c_2^3 I.\label{eq:id6}
\end{align}
\end{subequations}
Making use of only these identities, it is a matter of expanding $\chi_p(z)$ around $\lambda=\infty$ and some combinatorics to obtain that its degree in $\lambda$ is less or equal to $3p$.
Before handling the general case, consider $p=3$. Expanding $\chi_3(z)$ around $\lambda=\infty$ we get
\begin{align*}
\frac{\chi_3(z)}{\lambda^{3\cdot 4}}=\operatorname{Tr}&\left[A(z+2)A(z+1)A(z)\right]\frac{1}{\lambda^{3\cdot 4}}\\\
    =\operatorname{Tr}&[
    U_0^3+\lambda^{-1}(U_0^2 U_1+U_0U_1U_0+U_1U_0^2)\\
    &+\lambda^{-2}(U_0^2U_2(z)+U_0U_2(z+1)U_0+U_2(z+2)U_0^2)\\
    &+\lambda^{-2}(U_0U_1^2+U_1U_0U_1+U_1^2U_0)+\mathcal{O}(\lambda^{-3})
    ]\\
    =\operatorname{Tr}&[0+\lambda^{-1}\cdot 0+\lambda^{-2}\cdot 0+3\lambda^{-2}U_0U_1^2    +\mathcal{O}(\lambda^{-3})    ]\\
    =\operatorname{Tr}&[3\lambda^{-2}(c_1c_2(c_2-c_1)U_0U_1+c_1^3c_2^3 U_0)+\mathcal{O}(\lambda^{-3})]\\
    =\hspace{-0.25mm}3\lambda&^{-2}(c_1c_2(c_2-c_1)\operatorname{Tr}[U_0U_1]+c_1^3c_2^3 \operatorname{Tr}[U_0])+\mathcal{O}(\lambda^{-3})]\\
    =\hspace{0.6mm}\mathcal{O}&(\lambda^{-3}),
\end{align*}
as $\lambda\rightarrow \infty$, where in the third equality we used identities \eqref{eq:id3} and $\operatorname{Tr}[XY]=\operatorname{Tr}[YX]$, in the fourth equality we used identity \eqref{eq:id6}, in the fifth equality we used linearity of $\operatorname{Tr}[\cdot]$ and in the final equality we used
identities \eqref{eq:id1} and \eqref{eq:id2}.
We conclude that $\chi_3(z)$, and hence in particular $\mathcal{I}_3$, is of total degree  $9$ in $(f,g)$, as desired.

For general $p\geq 3$, we note that expanding $\lambda^{-4 p}\chi_p(z)$ around $\lambda=\infty$, all the coefficients of $\lambda^{-m}$, $0\leq m\leq p-1$, are sums of traces
\begin{equation}\label{eq:traces}
    \operatorname{Tr}[S_{j_{p-1}}^{(p-1)}\cdot S_{j_{p-2}}^{(p-2)}\cdot\ldots\cdot S_{j_1}^{(1)}\cdot S_{j_0}^{(0)}],
\end{equation}
with sum of indices
\begin{equation}\label{eq:sumofindices}
    j_{p-1}+j_{p-2}+\ldots+j_{1}+j_0=m,
\end{equation}
where, for $0\leq k\leq p-1$,
\begin{equation*}
S_j^{(k)}=\begin{cases}
    U_0 & \text{if $j=0$},\\
    U_1 & \text{if $j=1$},\\
    U_j(z+k) & \text{if $j\geq 2$},
\end{cases}
\end{equation*}
that satisfy
\begin{equation}\label{eq:tracebound}
 \#\{k:j_k=0\}> \#\{k:j_k\geq 2\}.
\end{equation}
We claim that all such traces are equal to zero. Let us write $n=\#\{k:S_k=U_0\}$. Note that $n$ must be nonzero, otherwise the sum of indices in equation \eqref{eq:sumofindices} will be at least $p>m$.

Suppose that $n=1$, then, apart from one factor $U_0$, only factors $U_1$ appear in the product \eqref{eq:traces}. Using $\operatorname{Tr}[XY]=\operatorname{Tr}[YX]$, we can rewrite such a trace as $\operatorname{Tr}[U_0 U_1^{p-1}]$, which one can show is zero by recursive use of identity \eqref{eq:id6} and then applying \eqref{eq:id1} and \eqref{eq:id2}.

Next, we consider $n\geq 2$. Since $U_0^2=0$ and $U_0$ and $U_1$ almost commute, see \eqref{eq:id4}, it follows that a trace of the form \eqref{eq:traces} can only be nonzero if there are enough factors $U_j(z+k)$, $j\geq 2$, that can separate all the occurrences of $U_0$'s in the product. But, precisely because of inequality \eqref{eq:tracebound}, this is not the case. To be precise, by first applying $\operatorname{Tr}[XY]=\operatorname{Tr}[YX]$ if necessary, there will always be a group of consecutive factors in the product in equation \eqref{eq:sumofindices},
\begin{equation*}
    S_{j_l}^{(l)}\cdot S_{j_{l-1}}^{(l-1)}\cdot\ldots\cdot S_{j_{k+1}}^{(k+1)}\cdot S_{j_{k}}^{(k)},
\end{equation*}
with $k<l$, where the indices in the formula should be read modulo $p$, that only consists of factors $U_0$ and $U_1$ with at least two occurrences of $U_0$. By identities \eqref{eq:id3} and \eqref{eq:id4}, such a product is equal to zero.
This proves the claim that all traces of the form \eqref{eq:traces} are zero. Therefore,
\begin{equation*}
    \lambda^{-4 p}\chi_p(z)=\mathcal{O}(\lambda^{-p}),
\end{equation*}
as $\lambda\rightarrow \infty$. Thus $\chi_p(z)$, and hence in particular $\mathcal{I}_p$, is of total degree  $3p$ in $(f,g)$, as desired. The theorem follows.
\end{proof}

\begin{remark}
Note that, in characteristic $p$, we have $D_t Q^p=0$ for any polynomial $Q\in \mathbb{F}_p[f,g,\alpha_1,\alpha_2,t]$. This means that $\mathcal{I}_p$ remains an integral of motion of $\Pfour$ upon dropping any terms that are $p$-th powers. We can furthermore always drop terms that only involve $\alpha_1$ and $\alpha_2$.
Applying this to $\mathcal{I}_2$ and $\mathcal{I}_3$, given in Appendix \ref{appendix:explicitformulas}, gives the respective simplified integrals
    \begin{align*}
    \mathcal{J}_2=&+
    t f^2 g+t f g^2 +t^2 f g 
    +f g+\alpha_2 t f +\alpha_1 t g,\\
    \mathcal{J}_3=&
    +t^3 f g+t^2 f^2 g-t^2 f g^2+\alpha_2 t^2 f+\alpha_1 t^2 g-\alpha_1 t f g+\alpha_2 t f g-\alpha_1 f^2 g\\
    &+\alpha_2 f^2 g-f^2 g+\alpha_1 f g^2-\alpha_2 f g^2- f g^2
    +\alpha_1\alpha_2  t-\alpha_1\alpha_2  f+\alpha_2^2 f-\alpha_2 f\\
    &-\alpha_1^2 g+\alpha_1 g+\alpha_1\alpha_2  g.
\end{align*}
We note, however, that these are no longer integrals of motion of $d\Ptwo$.
\end{remark}

\subsection{Symmetry group action on integrals of motion}\label{sec:sym}

To work out the action of the symmetry group on integrals of motion, we require explicit formulas for the coefficients $i_1$ and $i_p$ in Lemma \ref{lem:chi_form}. To obtain such formulas, we make use of the following
 lemma, proven in Appendix \ref{sec:appendix_proof}.
\begin{lemma}\label{lem:traceconstant}
    For any prime $p$, for $2\times 2$ matrix polynomials of degree $2$ in $z$,
    \begin{equation*}
        B(z)=B_0+z\,B_1+z^2\,B_2,
    \end{equation*}
     we have the following trace identity in characteristic $p$,
    \begin{gather*}
        \operatorname{Tr}\left[B(z+p-1)\cdot B(z+p-2)\cdot\ldots\cdot B(z+1)\cdot B(z)\right]=\\
        (z^p-z)^2 \operatorname{Tr}B_2^p-(z^p-z)\operatorname{Tr}B_2^{p-1}B_1+(z^p-z) \operatorname{Tr}B_1^p+\\       
        \operatorname{Tr}\left[B(p-1)\cdot B(p-2)\cdot\ldots\cdot B(1)\cdot B(0)\right].
    \end{gather*}
\end{lemma}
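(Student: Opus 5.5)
The plan is to use the shift invariance of the trace to reduce the identity to the computation of three coefficients. Write $P(z)=B(z+p-1)\cdots B(z+1)B(z)$ and $\tau(z)=\operatorname{Tr}P(z)$. In characteristic $p$ we have $z+p=z$. This gives $P(z+1)=B(z)P(z)B(z)^{-1}$ over the fraction field, or, avoiding inverses, $\operatorname{Tr}[B(z+p)\cdots B(z+1)]=\operatorname{Tr}[B(z)B(z+p-1)\cdots B(z+1)]=\tau(z)$ by cyclicity of the trace. Hence $\tau(z+1)=\tau(z)$, exactly as for $\chi_p$ in the proof of Lemma \ref{lem:chi_form}.

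Next I show that the shift-invariant polynomials in $R[z]$, for $R$ any commutative $\mathbb{F}_p$-algebra (here the polynomial ring in the entries of $B_0,B_1,B_2$), are exactly the polynomials in $u:=z^p-z$. Since $u$ is monic of degree $p$, divide $\tau$ repeatedly by $u$ to write $\tau=\sum_j r_j(z)u^j$ with $\deg r_j<p$. Each $r_j$ is then itself shift invariant. A shift-invariant polynomial of degree $<p$ is constant: if its top coefficient $a$ sits in degree $d$ with $1\le d<p$, the $z^{d-1}$ coefficient of $r(z+1)-r(z)$ is $da\neq0$.

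Because $\deg_z\tau\le 2p$, this gives
\begin{equation*}
\tau(z)=c_0+c_1u+c_2u^2,\qquad c_j\in R.
\end{equation*}
Two of these constants are immediate. Setting $z=0$ gives $u=0$, so $c_0=\operatorname{Tr}[B(p-1)\cdots B(0)]$. Comparing $z^{2p}$ coefficients gives $c_2=\operatorname{Tr}B_2^p$. For $c_1$, differentiate at $z=0$: since $u'=-1$ and $u(0)=0$, we get $c_1=-\tau'(0)$. For odd $p$ one can equivalently read $c_1$ off as the coefficient of $z^p$ in $\tau$. The case $p=2$ has only finitely many products to expand and I would simply verify it directly.

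It remains to prove $c_1=\operatorname{Tr}B_1^p-\operatorname{Tr}B_2^{p-1}B_1$, and this is the main obstacle. I would expand each factor about $z$:
\begin{equation*}
B(z+k)=B(k)+z(B_1+2kB_2)+z^2B_2 .
\end{equation*}
The coefficient $c_1$ is then the sum, over words in these three types of factors with total $z$-degree $p$, of traces weighted by polynomials in the positions $k\in\mathbb{F}_p$. If a word uses $a$ factors $B(k)$, $b$ factors of $B_1+2kB_2$ and $c$ factors $B_2$, the conditions $a+b+c=p$ and $b+2c=p$ force $a=c$.

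When $a=c=0$ the word is $\prod_k(B_1+2kB_2)$. I would evaluate its trace using the power sums $\sum_{k\in\mathbb{F}_p}k^m$, which vanish unless $(p-1)\mid m$ and $m>0$, in which case the sum is $-1$. I expect this to produce $\operatorname{Tr}B_1^p$ plus the term $-\operatorname{Tr}B_2^{p-1}B_1$. When $a=c>0$, I would try to show that the contributions cancel, again by summing the polynomial weights in $k$ over $\mathbb{F}_p$ with the power-sum rule.

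Because the bookkeeping of noncommuting words is delicate, my first attempt would instead be a reduction. Both sides are polynomial identities in the matrix entries, so it suffices to check them on a Zariski-dense set. I would take $B_2$ diagonalizable, conjugate so that $B_2$ is diagonal, and further use the freedom $B(z)\mapsto B(z+a)$, which leaves both sides invariant. Together these should simplify the expansion enough to compute $c_1$, for instance by choosing $B_2$ so that the off-diagonal structure makes most words vanish.
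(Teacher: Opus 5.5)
Your reduction to the single coefficient $c_1$ is sound. Shift invariance of $\tau$, the expansion $\tau=c_0+c_1u+c_2u^2$ with $u=z^p-z$, and the values of $c_0$ and $c_2$ are all correct. Your expectation for the words with $a=c=0$ also holds:
\begin{itemize}
\item rotating a set $S$ of positions carrying $2kB_2$, with $1\le|S|\le p-2$, gives a power sum of a polynomial of degree $|S|<p-1$, hence $0$;
\item $|S|=p$ has weight $0$ because of the factor at $k=0$;
\item $|S|=p-1$ contributes $2^{p-1}(p-1)!\operatorname{Tr}B_2^{p-1}B_1=-\operatorname{Tr}B_2^{p-1}B_1$.
\end{itemize}

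The genuine gap is the step you leave open: that all words with $a=c>0$ contribute nothing to $c_1$. The mechanism you propose for it does not work. Take a pattern $\omega$ of factors $B(k)$, $B_1+2kB_2$, $B_2$. The sum over its rotation orbit is $\sum_{n\in\mathbb{F}_p}F_\omega(n)$, where $F_\omega$ is a polynomial of degree up to $2a+b=p$. The power-sum rule therefore leaves minus the coefficient of $x^{p-1}$ in $F_\omega$, which is in general nonzero. Already for $p=3$, the six arrangements of $B(k)$, $B_1+2kB_2$, $B_2$ split into two rotation orbits with sums $2\operatorname{Tr}B_2^3$ and $\operatorname{Tr}B_2^3$. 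These cancel only against each other. The cancellation is global, and your Taylor-type factors obscure it, because each of $B(k)$ and $B_1+2kB_2$ mixes the fixed matrices $B_0,B_1,B_2$ with $k$-dependent weights. The Zariski-density fallback does not supply the missing computation either: diagonalising $B_2$ makes no word trace vanish identically, and you identify no concrete simplification.

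The missing idea, which is exactly the paper's argument, is to expand in the fixed matrices rather than about $z$. Writing $B(z+k)=B_0+(z+k)B_1+(z+k)^2B_2$ gives $\tau(z)=\sum_{\mathbf{j}\in\{0,1,2\}^p}g_{\mathbf{j}}(z)\operatorname{Tr}[B_{j_{p-1}}\cdots B_{j_0}]$ with $g_{\mathbf{j}}(z)=\prod_k(z+k)^{j_k}$. A cyclic rotation of $\mathbf{j}$ preserves the trace and shifts the argument of $g_{\mathbf{j}}$ by one. So each orbit of length $p$ contributes its common trace times $\sum_{n\in\mathbb{F}_p}g_{\mathbf{j}}(z+n)$. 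If $\deg g_{\mathbf{j}}=\sum_k j_k\le 2p-2$, this sum is independent of $z$ (Lemma~\ref{lem:polsum}). In particular, writing $g_{\mathbf{j}}=\sum_m\gamma_m z^m$, its $z^p$-coefficient $\sum_{m=p}^{2p-2}\gamma_m\binom{m}{p}\sum_n n^{m-p}$ vanishes because $0\le m-p\le p-2$.

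The only survivors are:
\begin{itemize}
\item the fixed tuple $(1,\ldots,1)$, contributing $u\operatorname{Tr}B_1^p$;
\item the fixed tuple $(2,\ldots,2)$, contributing $u^2\operatorname{Tr}B_2^p$;
\item the single orbit with $\sum_k j_k=2p-1$, contributing $u^2\sum_{m\in\mathbb{F}_p}(z+m)^{-1}\operatorname{Tr}B_2^{p-1}B_1=-u\operatorname{Tr}B_2^{p-1}B_1$.
\end{itemize}
Inserted into your framework, this gives $c_1=\operatorname{Tr}B_1^p-\operatorname{Tr}B_2^{p-1}B_1$ directly. The paper does the same orbit computation for all powers of $z$ at once, without the $u$-expansion.
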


Using the above lemma, we obtain the following refinement of Lemma \ref{lem:chi_form}.
\begin{proposition}\label{prop:chi_form_ref}
 The polynomial $ \chi_p(z)$ takes the form
\begin{equation*}
    \chi_p(z)=\begin{cases}
        \mathcal{I}_p-t^p z+t^pz^p & \text{if $p\neq 2$,}\\
        I_2+(1+t^2)(z+z^2) & \text{if $p= 2$,}
    \end{cases}
\end{equation*}
where $\mathcal{I}_p$ is defined in Definition \ref{def:integralofmotion}.   
\end{proposition}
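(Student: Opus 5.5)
The plan is to derive the proposition from Lemma \ref{lem:traceconstant}. That lemma applies only to matrix polynomials of degree $2$, while $A(z)$ has degree $3$, so the first step is to replace the product $M_p(z)$ by a cyclically equivalent product of quadratic matrices. Write $A(z)=P(z)Q(z)$ with
\begin{equation*}
P(z)=\begin{bmatrix} w & 0\\ 0 & 1\end{bmatrix}\begin{bmatrix} 1 & 0\\ f-tfg-tz & 1\end{bmatrix}\begin{bmatrix} g & 1\\ z+\alpha_2 & 0\end{bmatrix},\quad
Q(z)=\begin{bmatrix} -f & 1\\ z & 0\end{bmatrix}\begin{bmatrix} tz-g+tfg & 1\\ z-\alpha_1 & 0\end{bmatrix}\begin{bmatrix} w^{-1} & 0\\ 0 & 1\end{bmatrix}.
\end{equation*}
Then $P$ has degree $1$ in $z$, since the second row of the product of the last two factors is $(\ell g+z+\alpha_2,\ \ell)$ with $\ell=f-tfg-tz$. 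Using $\operatorname{Tr}[XY]=\operatorname{Tr}[YX]$ and $P(z+p-1)=P(z-1)$ in characteristic $p$, we get
\begin{equation*}
\chi_p(z)=\operatorname{Tr}\left[B(z+p-1)\cdots B(z+1)B(z)\right],\qquad B(z):=Q(z)P(z-1).
\end{equation*}
A priori $B$ has degree $3$. However, the $z^3$ terms in its second row cancel: the first column of $Q$ has top entry of degree $1$ and bottom entry $tz^2+\ldots$, the second column is $(\ast,z)^T$, and the leading $z$-coefficients of $P(z-1)$ are $\left[\begin{smallmatrix}0&0\\ 1-tg & -t\end{smallmatrix}\right]$. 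One checks $t\cdot P_{1j}$-leading terms against $z\cdot P_{2j}$-leading terms (equivalently, the cubic coefficient is $Q_2P_1$-type products that vanish after using $|A|$ of degree $3$). So $B(z)=B_0+zB_1+z^2B_2$, and I would verify this cancellation explicitly first.

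Applying Lemma \ref{lem:traceconstant} to $B$ gives
\begin{equation*}
\chi_p(z)=(z^p-z)^2\operatorname{Tr}B_2^p+(z^p-z)\bigl(\operatorname{Tr}B_1^p-\operatorname{Tr}B_2^{p-1}B_1\bigr)+\operatorname{Tr}\left[B(p-1)\cdots B(0)\right].
\end{equation*}
The constant term is handled by undoing the cyclic rearrangement at $z=0$, which turns it back into $\operatorname{Tr}[A(p-1)\cdots A(0)]=\mathcal{I}_p$; this is also consistent with $i_0=\mathcal{I}_p$ from Lemma \ref{lem:chi_form}. Lemma \ref{lem:chi_form} shows $\deg_z\chi_p\le 2p-1$, so comparing $z^{2p}$ coefficients forces $\operatorname{Tr}B_2^p=0$. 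Computing $B_2$ directly should confirm this: I expect $B_2$ to be nilpotent or rank one with trace related to $t$. What remains is
\begin{equation*}
\chi_p(z)=\mathcal{I}_p+(z^p-z)\,c_p,\qquad c_p:=\operatorname{Tr}B_1^p-\operatorname{Tr}B_2^{p-1}B_1.
\end{equation*}
Since $-z=z$ in characteristic $2$, the proposition reduces to the claims $c_p=t^p$ for odd $p$ and $c_2=1+t^2$.

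The main obstacle is evaluating $c_p$ in closed form for general $p$. The approach is to compute $B_1$ and $B_2$ explicitly and use Cayley--Hamilton. For a $2\times2$ matrix $X$, $\operatorname{Tr}X^p\equiv(\operatorname{Tr}X)^p$ modulo $p$. This holds because $\operatorname{Tr}X^p=\lambda_1^p+\lambda_2^p=(\lambda_1+\lambda_2)^p$ in characteristic $p$, which is legitimate since it is an identity of integer polynomials in the entries reduced mod $p$. It gives $\operatorname{Tr}B_1^p=(\operatorname{Tr}B_1)^p$ immediately.

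For $\operatorname{Tr}B_2^{p-1}B_1$, I would use the structure of $B_2$. If $B_2^2=\tau B_2$ with $\tau=\operatorname{Tr}B_2$ (rank one), then $\operatorname{Tr}B_2^{p-1}B_1=\tau^{p-2}\operatorname{Tr}B_2B_1$. If instead $B_2$ is nilpotent, this term vanishes for $p\ge3$. Either way, $c_p$ becomes an explicit expression that should simplify to $t^p$; the natural guess is $\operatorname{Tr}B_1=t$ with the correction term vanishing for odd $p$. For $p=2$, $c_2=(\operatorname{Tr}B_1)^2-\operatorname{Tr}B_2B_1$, computed directly, should give $1+t^2$. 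The explicit $\mathcal{I}_2$ in Appendix \ref{appendix:explicitformulas} provides a consistency check.
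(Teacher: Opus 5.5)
Your proposal is correct, and it takes a genuinely different route from the paper. The paper applies Lemma \ref{lem:traceconstant} only at $t=0$, where $A(z)$ is quadratic because $A_3\propto t^2$, to get $i_p=\mathcal{O}(t)$. It then computes the leading behaviour of $M_n(z)$ as $t\to\infty$ via rising factorials to get $i_p=t^p+\mathcal{O}(t^{p-1})$. Finally it uses $D_t i_p=0$ to kill the intermediate powers of $t$, and treats $p=2$ by direct computation.

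Your re-bracketing $B(z)=Q(z)P(z-1)$ is equivalent to the gauge change $\widetilde Y(z)=P(z-1)^{-1}Y(z)$, which gives $\widetilde Y(z+1)=B(z)\widetilde Y(z)$. It makes the lemma applicable for all $t$ at once, so you need neither the asymptotics in $t$ nor the $\Pfour$ flow, and $p=2$ is covered by the same computation. The computations you left hedged come out as you guessed. $B$ is independent of $w$, and
\[
B_2=\begin{bmatrix}0&0\\1&0\end{bmatrix},\qquad (B_1)_{11}=g-f,\quad (B_1)_{12}=1,\quad (B_1)_{22}=f-g+t .
\]
Hence $\operatorname{Tr}B_2^p=0$ and $\operatorname{Tr}B_1=t$. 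Also $\operatorname{Tr}B_2^{p-1}B_1=0$ for $p\ge 3$, while $\operatorname{Tr}B_2B_1=(B_1)_{12}=1$. This gives $c_p=t^p$ for odd $p$ and $c_2=t^2+1$, exactly as required.

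One correction to your exposition: there is no $z^3$ cancellation to check, and no appeal to $|A|$ is needed. The rows of $Q$ have entry degrees $(1,0)$ and $(2,1)$, while $P(z-1)$ has a constant first row and a linear second row. So $\deg B\le 2$ follows from plain degree counting, and in fact the first row of $B$ has degree at most $1$.

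As for what each route buys: the paper's argument works with $A(z)$ as given, without finding a factorisation, and shows how the differential invariance rigidifies $i_p$. Yours is shorter, purely algebraic and uniform in $p$. It also makes the degree bound of Lemma \ref{lem:chi_form} unnecessary, since $\operatorname{Tr}B_2^p=0$ follows directly from nilpotency.
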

\begin{proof}
The case $p=2$ is a matter of direct computation, so we assume that $p$ is an odd prime.
Using the notation in Lemma \ref{lem:chi_form}, we need to show that
$i_p=t^p$. To establish this, we claim that it is enough to prove that
\begin{enumerate}
    \item $i_p=\mathcal{O}(t)$ as $t\rightarrow 0$, and
    \item $i_p=t^p+\mathcal{O}(t^{p-1})$ as $t\rightarrow \infty$.
\end{enumerate}
Indeed, it follows from these two properties that
\begin{equation*}
    i_p=b_1 t+b_2 t^2+\ldots +b_{p-1}t^{p-1}+ t^p,
\end{equation*}
for some $b_k\in \mathbb{F}_p[f,g,\alpha_1,\alpha_2]$, $1\leq k\leq p-1$. Then, by equation \eqref{eq:Dt}, the derivative of $i_p$ must be zero, so
\begin{equation*}
0=b_1+t(D_tb_1+2b_2)+t^2(D_t b_2+3 b_3)+\ldots+t^{p-2}(D_t b_{p-2}+(p-1)b_{p-1})+t^{p-1}D_t b_{p-1}.
\end{equation*}
The constant term with respect to $t$ on the right-hand side is $b_1$, which must thus be zero. Then considering the coefficient of $t$, we find $b_2=0$ and continuing by induction we find $b_k=0$ for all $1\leq k\leq p-1$, so that indeed $i_p=t^p$.

Now (1) follows by applying Lemma \ref{lem:traceconstant} to the matrix
\begin{equation*}
    B(z)=A(z)|_{t=0},
\end{equation*}
with $w=1$, so that
\begin{equation*}
    B_0=A_0|_{t=0},\quad B_1=\begin{bmatrix}
        0 & 1\\
        h & 0
    \end{bmatrix},\quad B_2=\begin{bmatrix}
        0 & 0\\
        1 & 0
    \end{bmatrix}.
\end{equation*}
Indeed, since $B_2^2=0$ and $B_1^2=h I$,
\begin{equation*}
 \operatorname{Tr}B_2^p=0,\quad   \operatorname{Tr}B_2^{p-1}B_1=0,\quad \operatorname{Tr}B_1^p=0,
\end{equation*}
so that the formula in Lemma \ref{lem:traceconstant} shows that $\chi_p(z)|_{t=0}$ is just a constant in $z$. In other words, $i_p$ must vanish at $t=0$, that is, $i_p=\mathcal{O}(t)$ as $t\rightarrow 0$.

To prove (2), we note that
\begin{equation*}
    A(z)=(z-fg)\begin{bmatrix}
        (z+fg)t+\mathcal{O}(1) & w\\
        -w^{-1}(z+fg)^2t^2+\mathcal{O}(t) & -(z+fg)t+\mathcal{O}(1)
    \end{bmatrix}\qquad (t\rightarrow \infty),
\end{equation*}
and with induction on $n$, we obtain
\begin{gather*}
    M_n(z)=(z-fg)^{(n)} \\
    \cdot\begin{bmatrix}
        (z+fg)t^n+\mathcal{O}(t^{n-1}) & wt^{n-1}+\mathcal{O}(t^{n-2})\\
        -w^{-1}(z+fg)(z+fg+n-1)t^{n+1}+\mathcal{O}(t^n) & -(z+fg+n-1)t^n+\mathcal{O}(t^{n-1}) 
    \end{bmatrix},
\end{gather*}
as $t\rightarrow \infty$, where $(x)^{(n)}=x(x+1)\cdot\ldots\cdot(x+n-1)$ denotes the rising factorial, for $n\geq 1$.
Therefore
    \begin{equation*}
    \chi_p(z)=(z-fg)^{(p)}t^p+\mathcal{O}(t^{p-1})\qquad (t\rightarrow \infty).
\end{equation*}
Furthermore, we have the identity $(x)^{(p)}=x^p-x$ in characteristic $p$. In particular, $(x)^{(p)}$ is an $\mathbb{F}_p$-linear map and thus
    \begin{equation*}
    \chi_p(z)=t^p(fg-f^pg^p-z+z^p)+\mathcal{O}(t^{p-1})\qquad (t\rightarrow \infty).
\end{equation*}

Consequently,
\begin{equation*}
    i_0=t^p(fg-f^pg^p)+\mathcal{O}(t^{p-1}),\quad i_1=-t^p+\mathcal{O}(t^{p-1}),\quad i_p=t^p+\mathcal{O}(t^{p-1}),
\end{equation*}
as $t\rightarrow \infty$. 
This establishes (2) and as noted before, yields the lemma together with (1).
\end{proof}

We are now in position to work out the action of the affine Weyl symmetry group on the integrals of motion.
\begin{theorem}\label{thm:sym}
The generators $\{s_0,s_1,s_2,\pi\}$ of $\widetilde{W}(A_2^{(1)})$ act on the integral of motion $\mathcal{I}_{p}$ as
\begin{align*}
    s_0(\mathcal{I}_{p})&=\mathcal{I}_{p},\\
    s_1(\mathcal{I}_{p})&=\mathcal{I}_{p}+t^p(\alpha_1^p-\alpha_1),\\
    s_2(\mathcal{I}_{p})&=\mathcal{I}_{p}-t^p(\alpha_2^p-\alpha_2),\\
    \pi(\mathcal{I}_{p})&=\mathcal{I}_{p}-t^p(\alpha_2^p-\alpha_2),
\end{align*}
for odd primes $p$. For $p=2$, the formulas have to be adjusted by replacing $t^2$ by $1+t^2$ on the right-hand sides.
\end{theorem}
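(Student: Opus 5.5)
The plan is to realise the reflections $s_1$ and $s_2$ as \emph{gauge transformations combined with a shift of the spectral variable} acting on the coefficient matrix $A(z)$. Then we read off their effect on $\mathcal{I}_p=\chi_p(0)$ using the explicit form of $\chi_p(z)$ from Proposition \ref{prop:chi_form_ref}. The remaining generators $\pi$ and $s_0$ will then follow from the group relations and the invariance of $\mathcal{I}_p$ under $T_1,T_2$ (Theorem \ref{thm:main_theorem}).

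\emph{Step 1 (reflections as gauge-plus-shift).} Under $s_1$ the zero set $\{0,\alpha_1,-\alpha_2\}$ of $|A(z)|$ in \eqref{eq:coefmatrixdet} becomes $\{0,-\alpha_1,-\alpha_1-\alpha_2\}$, which is the original set shifted by $-\alpha_1$. I therefore expect an identity
\begin{equation*}
s_1(A)(z)=G_1(z+1)\,A(z+\alpha_1)\,G_1(z)^{-1},
\end{equation*}
with $G_1(z)$ rational in $z$ and $\det G_1$ nonvanishing where needed. Similarly, for $s_2$ the zero set becomes $\{0,\alpha_1+\alpha_2,\alpha_2\}$, and I expect
\begin{equation*}
s_2(A)(z)=G_2(z+1)\,A(z-\alpha_2)\,G_2(z)^{-1}.
\end{equation*}
To find $G_j$, I would use the factorised form \eqref{eq:matrixAdef}. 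The factors $\begin{bmatrix}-f&1\\ z&0\end{bmatrix}$ and $\begin{bmatrix}tz-g+tfg&1\\ z-\alpha_1&0\end{bmatrix}$ carry the zeros $0$ and $\alpha_1$. Since $s_1$ only changes $g\mapsto g-\alpha_1/f$, it should amount to a refactorisation (an ``$R$-matrix'' swap) of these two adjacent factors after the shift $z\mapsto z+\alpha_1$. Any leftover discrepancy should then be absorbed by a constant or upper-triangular gauge, and by $w$. The analogous swap of the factors carrying $-\alpha_2$ and $0$ should give $s_2$.

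Alternatively, one can verify the claimed identity from the characterisation (i)--(ii) of $A(z)$: the normalisation \eqref{eq:coefmatrixnormalisation} and the determinant \eqref{eq:coefmatrixdet}. In that approach $G_j$ is determined by solving $s_j(A)(z)G_j(z)=G_j(z+1)A(z+c_j)$ directly.

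\emph{Step 2 (consequence for $\chi_p$).} Since $G_j(z+p)=G_j(z)$ in characteristic $p$, the product over $p$ consecutive shifts telescopes up to conjugation. Hence
\begin{equation*}
s_1(\chi_p)(z)=\chi_p(z+\alpha_1),\qquad s_2(\chi_p)(z)=\chi_p(z-\alpha_2).
\end{equation*}
Proposition \ref{prop:chi_form_ref} gives $\chi_p(z)=\mathcal{I}_p-t^p(z-z^p)$, and $t$ is fixed by the whole group. Evaluating at $z=0$ then yields
\begin{equation*}
s_1(\mathcal{I}_p)=\mathcal{I}_p+t^p(\alpha_1^p-\alpha_1),\qquad s_2(\mathcal{I}_p)=\mathcal{I}_p-t^p(\alpha_2^p-\alpha_2).
\end{equation*}
For $p=2$ the same computation applies with $(1+t^2)(z+z^2)$, which accounts for replacing $t^2$ by $1+t^2$.

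\emph{Step 3 (the other generators).} From $T_2=\pi s_1s_2$ we get $\pi=T_2s_2s_1$. We then compose the actions, using that $T_2(\mathcal{I}_p)=\mathcal{I}_p$, that each $w$ acts as a field automorphism so correction terms transform by substituting transformed parameters, and that $\beta\mapsto\beta^p-\beta$ is additive. Composing the actions of $s_1$ and $s_2$ in this way should yield $\pi(\mathcal{I}_p)=\mathcal{I}_p-t^p(\alpha_2^p-\alpha_2)$. Then $s_0=\pi s_1\pi^{-1}$ gives $s_0(\mathcal{I}_p)$ by the same bookkeeping, and the correction terms should cancel because $(1-\alpha)^p-(1-\alpha)=-(\alpha^p-\alpha)$.

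The expected main obstacle is Step 1: exhibiting the gauge matrices $G_1,G_2$ explicitly and checking that the refactorisation genuinely reproduces the transformed $(f,g)$ of Table \ref{table:symmetryactions}. I would first attempt the factor swap for $s_1$ symbolically, since only $g$ changes there, and only fall back on the characterisation (i)--(ii) if the swap does not close.
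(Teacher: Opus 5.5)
\paragraph{Comparison with the paper's proof.} Your proof is correct. For $s_1$ and $s_2$ it is the paper's argument, but for $\pi$ and $s_0$ it takes a different and more economical route.

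For $s_1,s_2$, the paper verifies by direct computation that $s_1(A(z))=R_{s_1}A(z+\alpha_1)R_{s_1}^{-1}$ and $s_2(A(z))=R_{s_2}A(z-\alpha_2)R_{s_2}^{-1}$. The gauge is in fact \emph{constant} and lower triangular: $R_{s_1}=\left[\begin{smallmatrix}1&0\\ 2t\alpha_1w^{-1}&1\end{smallmatrix}\right]$ and $R_{s_2}=\left[\begin{smallmatrix}1&0\\ -2t\alpha_2w^{-1}&1\end{smallmatrix}\right]$. They are chosen exactly to restore the normalisation \eqref{eq:coefmatrixnormalisation} of the shifted matrix. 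So your fallback via (i)--(ii) is the route the paper takes, and the periodicity $G_j(z+p)=G_j(z)$ in your Step 2 holds trivially. You leave this identity unverified, but the paper also only asserts it by direct computation.

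\paragraph{Where the routes differ.} The paper does not use group relations for the other two generators. Instead it proves two further Lax-level identities, $T_1T_2s_0(A(z))=A(z)$ and $T_1\pi(A(z))=R_{s_2}A(z-\alpha_2)R_{s_2}^{-1}$. It then removes the translations using the $T_j$-invariance of $\chi_p(z)$.

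Your route replaces these two extra computations with the relations $\pi=T_2s_2s_1$ and $s_0=\pi s_1\pi^{-1}$, together with $T_2(\mathcal{I}_p)=\mathcal{I}_p$. This saves work, but it makes the result depend on how words in the generators act.

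\paragraph{Composition order.} With Table~\ref{table:symmetryactions}, the paper's relations (e.g.\ $T_2=\pi s_1 s_2$ and $\pi s_1=s_0\pi$) hold only if a word acts by successive substitution from left to right, that is, $(\pi s_1 s_2)(F)=s_2\bigl(s_1(\pi(F))\bigr)$. You must therefore compute
\[
\pi(\mathcal{I}_p)=s_1\bigl(s_2(T_2(\mathcal{I}_p))\bigr)=\mathcal{I}_p-t^p(\alpha_2^p-\alpha_2),
\]
\[
s_0(\mathcal{I}_p)=\pi^{-1}\bigl(s_1(\pi(\mathcal{I}_p))\bigr)=\mathcal{I}_p,
\]
where the second uses $\pi^{-1}(\alpha_2)=\alpha_1$ and hence $\pi^{-1}(\mathcal{I}_p)=\mathcal{I}_p+t^p(\alpha_1^p-\alpha_1)$. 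If you compose in the opposite order, you get $\mathcal{I}_p+t^p(\alpha_1^p-\alpha_1)$ for $\pi$. That is $\pi^{-1}(\mathcal{I}_p)$, not $\pi(\mathcal{I}_p)$.

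In your write-up, fix this convention explicitly. Once fixed, your bookkeeping (additivity of $x\mapsto x^p-x$ and invariance of $t$) goes through, including for $p=2$ with $t^2$ replaced by $1+t^2$.
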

\begin{proof}
To prove the theorem, we require the actions of the symmetries on the coefficient matrix $A(z)$ given in equation \eqref{eq:coefficient_matrix}. We already know the actions of the translations $T_1$ and $T_2$ on $A(z)$, see equation \eqref{eq:translation_action_coef}. We claim that we have the following further actions of the symmetries on the coefficient matrix,
\begin{subequations}\label{eq:laxsymmetries}
\begin{align}
    T_1T_2s_0(A(z))&=R_{s_0}A(z)R_{s_0}^{-1},\label{eq:laxsymmetriess0}\\
    s_1(A(z))&=R_{s_1}A(z+\alpha_1)R_{s_1}^{-1},\label{eq:laxsymmetriess1}\\
    s_2(A(z))&=R_{s_2}A(z-\alpha_2)R_{s_2}^{-1},\label{eq:laxsymmetriess2}\\
    T_1\pi(A(z))&=R_{\hspace{0.5mm}\pi\hspace{0.5mm}}A(z-\alpha_2)R_{\hspace{0.5mm}\pi\hspace{0.5mm}}^{-1},\label{eq:laxsymmetriespi}
\end{align}
\end{subequations}
where
\begin{equation}\label{eq:Rmatrices}
    R_{s_0}=I,\quad R_{s_1}=\begin{bmatrix}
        1 & 0\\
        2t\alpha_1 w^{-1} & 1
    \end{bmatrix},\quad
    R_{s_2}=R_{\hspace{0.5mm}\pi\hspace{0.5mm}}=\begin{bmatrix}
        1 & 0\\
        -2t\alpha_2 w^{-1} & 1
    \end{bmatrix}.
\end{equation}
These equations can be verified by direction computation, but let us briefly explain where they come from.
Recall that the coefficient matrix is analytically characterised by its normalisation as $z\rightarrow \infty$ in equation \eqref{eq:coefmatrixnormalisation} as well as the value of its determinant in \eqref{eq:coefmatrixdet}. Let us denote the determinant of $A(z)$ by
\begin{equation*}
    \Delta(z)=-z(z-\alpha_1)(z+\alpha_2).
\end{equation*}
When we apply for example $s_1$ to this determinant, we obtain
\begin{equation*}
    s_1(\Delta(z))=-z(z+\alpha_1)(z+\alpha_1+\alpha_2)=\Delta(z+\alpha_1).
\end{equation*}
This suggests that $s_1(A(z))$ and $A(z+\alpha_1)$ are related. However, $A(z+\alpha_1)$ is not normalised as in equation \eqref{eq:coefmatrixnormalisation}. It can be brought into the correct normal form through conjugation by an appropriate lower-triangular matrix $R$. A direct calculation gives that $R=R_{s_1}$ as defined in the \eqref{eq:Rmatrices} does the job and relates $s_1(A(z))$ and $A(z+\alpha_1)$ through conjugation as in equation \eqref{eq:laxsymmetriess1}. The exact same procedure yields the formulas in \eqref{eq:laxsymmetries} for the remaining symmetries $T_1T_2s_0$, $s_2$ and $T_1\pi$.

From equations \eqref{eq:laxsymmetries}, we immediately obtain the actions of the symmetries on $M_p(z)$, defined in equation \eqref{eq:defi_chi},
\begin{align*}
T_1T_2s_0(M_p(z))&=R_{s_0}M_p(z)R_{s_0}^{-1},\\
s_1(M_p(z))&=R_{s_1}M_p(z+\alpha_1)R_{s_1}^{-1},\\
s_2(M_p(z))&=R_{s_2}M_p(z-\alpha_2)R_{s_2}^{-1},\\
T_1\pi(M_p(z))&=R_{\hspace{0.5mm}\pi\hspace{0.5mm}}M_p(z-\alpha_2)R_{\hspace{0.5mm}\pi\hspace{0.5mm}}^{-1}.
\end{align*}
Then, by taking traces of both sides, and recalling from Lemma \ref{lem:chi_form} that $\chi_p(z)=\operatorname{Tr}(M_p(z))$ is invariant under $T_1$ and $T_2$, we get
\begin{align*}
    s_0(\chi_p(z))&=\chi_p(z),\\
    s_1(\chi_p(z))&=\chi_p(z+\alpha_1),\\
    s_2(\chi_p(z))&=\chi_p(z-\alpha_2),\\
    \pi(\chi_p(z))&=\chi_p(z-\alpha_2).
\end{align*}
Finally, recall from Proposition \ref{prop:chi_form_ref} that $\chi_p(z)=\mathcal{I}_p-t^pz+t^pz^p$ for $p\geq 3$, which yields the actions of the symmetries on $\mathcal{I}_{p}$ as described in the theorem for $p\geq 3$. When $p=2$, we use the alternative formula $\chi_2(z)=\mathcal{I}_2+(1+t^2)(z+z^2)$ to obtain the result, finishing the proof of the theorem.
\end{proof}

Theorem \ref{thm:invariance} is a corollary of the above theorem.
\begin{proof}[Proof of Theorem \ref{thm:invariance}.]
Define, for $p\neq 3$,
\begin{equation}
    \mathcal{L}_{p}=\begin{cases}
        -\tfrac{1}{3}t^p(\alpha_1^p-\alpha_1-\alpha_2^p+\alpha_2), & \text{if $p\geq 5$,}\\
        (1+t^2)(\alpha_1^2+\alpha_1+\alpha_2^2+\alpha_2) & \text{if $p=2$.}
    \end{cases}
\end{equation}
Then $\mathcal{L}_{p}$ satisfies the exact same equations as $\mathcal{I}_{p}$ in Theorem \ref{thm:sym}. Therefore, the difference
$\widetilde{\mathcal{I}}_{p}=\mathcal{I}_{p}-\mathcal{L}_{p}$ is invariant under the extended affine Weyl group symmetry, yielding the theorem.
\end{proof}

\section{Fibres and special solutions}\label{sec:fibres}

In this section, we investigate the fibres of the integrals of motion. To define these fibres properly, we first study the integrals of motions as functions on the initial value space of $\Pfour$ and $d\Ptwo$, introduced in the next section.

\subsection{Regularity of integrals of motion}\label{sec:regularity}

The fourth Painlev\'e equation and $d\Ptwo$ can be simultaneously regularised on a rational surface of type $E_6^{(1)}$, called the initial value space \cites{okamoto1979,s:01}. The initial value space is well-defined over any field, regardless of its characteristic. 
We first recall its construction and then study the integral of motion as a function on the initial value space.

Fix an arbitrary field $\mathbf{k}$ and let $t\in \mathbf{k}$ and $(\alpha_1,\alpha_2)\in \mathbf{k}^2$.
Our starting point is the surface
\begin{equation}\label{eq:hirzebruch}
    \{(f,g)\in \mathbb{P}_\mathbf{k}^1\times \mathbb{P}_\mathbf{k}^1\}.
\end{equation}
This surface is covered by the four affine charts $(f,g)$, $(F,g)$, $(f,G)$ and $(F,G)$, where $F=1/f$ and $G=1/g$.
We perform eight consecutive blowups to this surface to obtain the initial value space. To this end, we use the following notation. In a given coordinate chart $(x,y)$, after blowing up the base point
\begin{equation*}
    b_i:\quad (x,y)=(x_*,y_*),
\end{equation*}
the corresponding exceptional line $E_i$ is covered by the coordinate charts $(u_i,v_i)$ and $(U_i,V_i)$, defined through
\begin{equation*}
    \begin{cases}
        x=x_*+u_i v_i, &\\
        y=y_*+v_i, &
    \end{cases}\qquad
    \begin{cases}
        x=x_*+U_i, &\\
        y=y_*+U_iV_i, &
    \end{cases}
\end{equation*}
so that the exceptional line has local equations $v_i=0$ and $U_i=0$ in these coordinate charts respectively. Denote by $\widehat{\mathcal{X}}_{\alpha}$ the result of blowing up the surface \eqref{eq:hirzebruch} at the following eight base point, 

\begin{equation*}\label{eq:basepoints}
\begin{aligned}
&b_1: (F,g)=(0,0), & &b_2: (U_1,V_1)=(0,-\alpha_2),\\
    &b_3: (f,G) = (0, 0),   &  &b_{4} : (u_3, v_3) = ( \alpha_1 ,0),\\
    &b_5 : (F,G) = (0,0), &   &b_{6} : (U_5, V_5) = (0,1),\\
    &b_{7} : (U_{6},V_{6})  = (0, -t),  & &b_{8} : (U_{7},V_{7}) = \left( 0,t^2+\alpha_1+\alpha_2-1\right).
\end{aligned}
\end{equation*}



Writing $b_i\leftarrow b_j$ when $b_j\in E_i$, we have three sequences of base points,
\begin{equation*}
   b_1\leftarrow b_2,\quad b_3\leftarrow b_4,\quad
   b_5\leftarrow b_6\leftarrow b_7\leftarrow b_8,
\end{equation*}
with the final base points being blown up to obtain $\widehat{\mathcal{X}}_{\alpha}$ given by $b_2$, $b_4$ and $b_8$ respectively. Let $D\subseteq \widehat{\mathcal{X}}_{\alpha}$ denote the strict transform under the final blow-ups of the total transform of the curve $\{f=\infty\}\cup\{g=\infty\}$ under the non-final blow ups. This is the unique, effective, anti-canonical divisor, represented in blue in Figure \ref{fig:initialvaluespace}. We refer to \cite{s:01} for more details.


\begin{definition}\label{defi:initialvaluespace}
We call $\mathcal{X}_{\alpha}:=\widehat{\mathcal{X}}_{\alpha}\setminus D$ the initial value space of $\Pfour$ and $d\Ptwo$ over $\mathbf{k}$. Here the dependence of $\mathcal{X}_{\alpha}$ on $t$ is suppressed and we will sometimes also suppress the dependence on $\alpha$, simply writing $\mathcal{X}$ for the initial value space.
\end{definition}

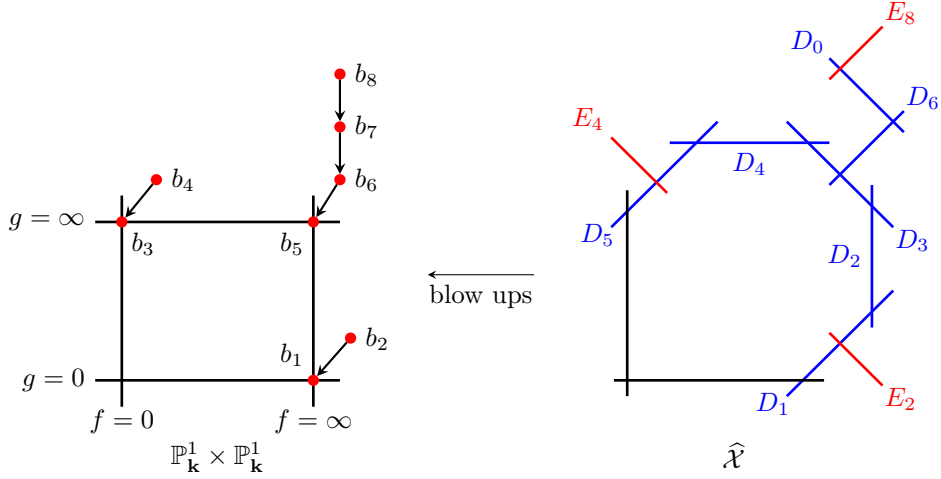
\begin{figure}[t]
\centering
	\begin{tikzpicture}[scale=.70,>=stealth,basept/.style={circle, draw=red!100, fill=red!100, thick, inner sep=0pt,minimum size=1.2mm}]
		\begin{scope}[xshift = -4cm, yshift = -1cm]
			\draw [black, line width = 1pt] 	(4.1,2.5) 	-- (-0.5,2.5)	node [left]  {$g=\infty$} node[pos=0, right] {};
			\draw [black, line width = 1pt] 	(0,3) -- (0,-1)			node [below = -1mm] {$f=0$}  node[pos=0, above, xshift=-7pt] {} ;
			\draw [black, line width = 1pt] 	(3.6,3) -- (3.6,-1)		node [below = -1mm]  {$f=\infty$} node[pos=0, above, xshift=7pt] {};
			\draw [black, line width = 1pt] 	(4.1,-.5) 	-- (-0.5,-0.5)	node [left]  {$g=0$} node[pos=0, right] {};

			\node (p1) at (3.6,-.5) [basept,label={[xshift=-8pt, yshift = -0.2 pt] $b_{1}$}] {};
			\node (p2) at (4.3,0.3) [basept,label={[xshift=10pt, yshift = -10 pt] $b_{2}$}] {};
			\node (p3) at (0,2.5) [basept,label={[yshift=-18pt, xshift=+8pt] $b_{3}$}] {};
			\node (p4) at (0.65,3.3) [basept,label={[xshift=10pt, yshift = -10 pt] $b_{4}$}] {};
			\node (p5) at (3.6,2.5) [basept,label={[xshift=-8pt,yshift=-18pt] $b_{5}$}] {};
			\node (p6) at (4.1,3.3) [basept,label={[xshift=10pt, yshift = -10 pt] $b_{6}$}] {};
			\node (p7) at (4.1,4.3) [basept,label={[xshift=10pt, yshift = -10 pt] $b_{7}$}] {};
			\node (p8) at (4.1,5.3) [basept,label={[xshift=10pt, yshift = -10 pt] $b_{8}$}] {};
			\draw [line width = 0.8pt, ->] (p2) -- (p1);
			\draw [line width = 0.8pt, ->] (p4) -- (p3);
			\draw [line width = 0.8pt, ->] (p6) -- (p5);
			\draw [line width = 0.8pt, ->] (p7) -- (p6);
			\draw [line width = 0.8pt, ->] (p8) -- (p7);
			\node (P1P1) at (1.8, -1.5) [below] {$\mathbb{P}_\mathbf{k}^1 \times \mathbb{P}_\mathbf{k}^1$};
		\end{scope}
	
		\draw [->] (3.75,0.5)--(1.75,0.5) node[pos=0.5, below] {$\text{blow ups}$};
	
		\begin{scope}[xshift = 6.5cm, yshift= .5cm]
			\draw [blue, line width = 1pt] 	(2.8,2.5) 	-- (-0.2,2.5)	node [pos = .5, below]  {
			$D_4$
			} node[pos=0, right] {};
			\draw [blue, line width = 1pt] 	(3.6,1.7) -- (3.6,-1)		node [pos = .5, left]  {
			$D_2$
			} node[pos=0, above, xshift=7pt] {};

			\draw [blue, line width = 1pt] 	(-1.3,0.9) 	-- (.7, 2.9)	 node [left] {} node[pos = 0, below left = -2.5mm] {
			$D_5$
			};
			\draw [red, line width = 1pt] 	(-.25,1.55) 	-- (-1.3,2.6)	 node [left] {} node[pos=1, above left = -0.5mm ] {$E_4$};
			
			\draw [blue, line width = 1pt] 	(4,-.3) node[left]{}	-- (2,-2.3)	 node [below left] {} node[below left = -2.5mm] {
			$D_1$
			};

			\draw [red, line width = 1pt] 	(2.8,-1.1) 	-- (3.8,-2.1)	 node [left] {} node[pos=1, below right  = -1.5mm ] {$E_2$};
			
			\draw [blue, line width = 1pt]	(4,.9) -- (2,2.9) node[ pos=0, below right  = -1.5mm] {
			$D_3$
			};
			\draw [blue, line width = 1pt]	(2.8,1.7) -- (4.2,3.1) node [above right = -1.5mm] {
			$D_6$
			} ;
			\draw [blue, line width = 1pt]	(4.2,2.7) -- (2.8,4.1) node [above left = -1mm] {
			$D_0$
			};
			\draw [red, line width = 1pt]	(2.8,3.7) -- (3.8,4.7)  node [above right = -1.5mm] {$E_8$};

			\draw [black, line width = 1pt]  (-1,1.6) -- (-1,-2.3);
			\draw [black, line width = 1pt]  (-1.25,-2) 	-- (2.7,-2);
\node (Xa) at (1, -3) [below] {$\widehat{\mathcal{X}}$};
%

		\end{scope}
	\end{tikzpicture}
	\caption{Representation of blow ups leading to the initial value space $\mathcal{X}=\widehat{\mathcal{X}}\setminus D$, with in blue the components of the effective anti-canonical divisor $D$.}
	\label{fig:initialvaluespace}
\end{figure}

The bi-rational mappings in \eqref{eq:painleve_equation} lift uniquely to isomorphisms
\begin{equation}\label{eq:translation_isos}
    T_j:\widehat{\mathcal{X}}_{\alpha}\rightarrow \widehat{\mathcal{X}}_{\overline{\alpha}^j}\qquad (j=1,2),
\end{equation}
 except in the degenerate parameter cases $\alpha_1\in \{0,-1\}$, $\alpha_2\in \{0,-1\}$ and $\alpha_1+\alpha_2\in \{0,-1\}$. They furthermore map the anti-canonical divisor on the left to the anti-canonical divisor on the right, inducing corresponding maps between the initial value spaces $\mathcal{X}_{\alpha}$ and $\mathcal{X}_{\overline{\alpha}^j}$.

Starting with a regular value $(f,g)\in \mathbb{A}_\mathbf{k}^2$, it is possible to land on the exceptional lines $E_2$, $E_4$ and $E_8$ after a number of translations. In what follows, we require coordinates that parametrise these exceptional lines away from the anticanonical divisor $D$. 
\begin{itemize}
    \item The part of $E_2$ away from $D$ is covered by the coordinates $(x_2,y_2)=(U_2,V_2)$, given by
\begin{equation*}
    f=\frac{1}{x_2},\qquad g=x_2(-\alpha_2+x_2y_2),
\end{equation*}
in which $E_2$ has local equation $x_2=0$.
 \item The part of $E_4$ away from $D$ is covered by the coordinates $(x_4,y_4)=(v_4,u_4)$, given by
\begin{equation*}
    f=x_4(\alpha_1+x_4y_4),\qquad g=\frac{1}{x_4},
\end{equation*}
in which $E_4$ has local equation $x_4=0$.
 \item The part of $E_8$ away from $D$ is covered by the coordinates $(x_8,y_8)=(U_8,V_8)$, given by
\begin{equation*}
    f=\frac{1}{x_8},\qquad g=\frac{1}{x_8(1-t\,x_8+(t^2+1-\alpha_1-\alpha_2)x_8^2+y_8\,x_8^3)},
\end{equation*}
in which $E_8$ has local equation $x_8=0$.
\end{itemize}
In particular, we note that the part of $E_j$ away from $D$ is given by the affine line $\{y_j\in \mathbb{A}_\mathbf{k}^1, x_j=0\}$, for $j=2,4,8$.

\begin{theorem}\label{thm:morphism}
In characteristic $p\geq 2$,
the integral of motion $\mathcal{I}_p$ extends polynomially to the exceptional curves $E_2$, $E_4$ and $E_8$. Namely, for $j=2,4,8$, in the local coordinates $(x_j,y_j)$ introduced above,
\begin{equation*}
  \mathcal{I}_p=P_j(y_j)+x_j R_j(x_j,y_j),
\end{equation*}
where $P_j\in \mathbb{F}_p[y_j,\alpha_1,\alpha_2,t]$ is a polynomial and $R_j\in \mathbb{F}_p(x_j,y_j,\alpha_1,\alpha_2,t)$ is a rational function which in reduced form does not have $x_j$ as a divisor of the denominator. In particular, the integral of motion defines a regular map
\begin{equation}\label{eq:integral_morphism}
   \mathcal{I}_p: \mathcal{X}\rightarrow \mathbb{A}_\mathbf{k}^{1},
\end{equation}
for any choice of parameters.
\end{theorem}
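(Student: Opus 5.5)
The plan is to work at the level of the Lax matrix rather than with the explicit polynomial $\mathcal{I}_p$. Since $\mathcal{I}_p=\operatorname{Tr}[A(p-1)\cdots A(0)]$ and the trace is unchanged when $A(z)$ is replaced by $R\,A(z)\,R^{-1}$ for a $z$-independent $R$, it suffices to find, for each $j=2,4,8$, a $z$-independent gauge matrix $R_j$, which may depend rationally on $x_j$ and on the free parameter $w$ (for instance a rescaling $w=\widehat w\, x_j^{k}$). After the substitution of the chart coordinates, the conjugated matrix $\widetilde A(z)=R_jA(z)R_j^{-1}$ should have entries in $\mathbb{F}_p[x_j,y_j,\alpha_1,\alpha_2,t,z,\widehat w^{\pm1}]$, that is, polynomial in $x_j$. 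Then $\mathcal{I}_p=\operatorname{Tr}[\widetilde A(p-1)\cdots\widetilde A(0)]$ is manifestly polynomial in $x_j$. Setting $P_j(y_j)=\mathcal{I}_p|_{x_j=0}$ gives the claimed decomposition, with $R_j$ in fact polynomial. The statement that $P_j$ does not depend on $\widehat w$ follows because $\mathcal{I}_p$ itself does not.

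To find $R_j$, I would substitute the chart formulas into the factorised form \eqref{eq:matrixAdef} rather than the expanded one, and use diagonal gauges $\operatorname{diag}(x_j^{a},1)$, which amount to rescaling $w$. The charts behave differently.

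\begin{itemize}
\item For $E_2$ we have $f=1/x_2$ and $fg=-\alpha_2+x_2y_2$, so $fg$, $h$ and $t^2fg$ are polynomial. The only singular terms are the $f$-dependent ones: $-t f^2g^2$ and $-\alpha_2 f-f^2g+tf^2g^2$ in $A_0$, and $fg\,h$. Here I expect the diagonal entries of $A_0$ to have poles $\propto (fg+\alpha_2)f$-type combinations that cancel to first order because $fg+\alpha_2=x_2y_2$. Whatever poles remain, I would remove them with a lower-triangular gauge $\begin{bmatrix}1&0\\ c/x_2 & 1\end{bmatrix}$ combined with a $w$-rescaling.
\item $E_4$ should follow by the symmetry $\pi$ or $s_0$ from Table \ref{table:symmetryactions}, or by the analogous computation with $fg=\alpha_1+x_4y_4$.
\item Alternatively, $E_2$ and $E_4$ may be reached from $E_8$ or from each other via Theorem \ref{thm:sym}. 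The affine Weyl actions are isomorphisms of $\widehat{\mathcal X}$ permuting exceptional curves, and they change $\mathcal{I}_p$ only by polynomials in $t,\alpha$. This reduces the work to one or two charts.
\end{itemize}

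The main obstacle is $E_8$, where both $f$ and $g$ blow up like $1/x_8$ and $fg\sim x_8^{-2}$. Here the determinant normalisation and the precise base points $b_6,b_7,b_8$ must enter. As a fallback I would use the degree information from Theorem \ref{thm:main_theorem}: in the $\lambda$-scaling of its proof, $\mathcal{I}_p$ has total degree $3p$ with top behaviour controlled by $U_0,U_1$. I would then expand $\mathcal{I}_p$ in $x_8$ with $f=1/x_8$ and $g=1/(x_8(1-tx_8+\dots))$. The coefficients of the negative powers $x_8^{-3p},\dots,x_8^{-1}$ must vanish. I would show this by a trace-combinatorics argument like the one in the proof of Theorem \ref{thm:main_theorem}, namely by gauging $A(z)$ with $w=\widehat w fg$ and $\operatorname{diag}(1,x_8^{\,b})$ so that the leading part becomes nilpotent with the quasi-commutation identities, and then bootstrapping order by order.

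A cleaner alternative is to use that the lifted maps $T_1,T_2$ are isomorphisms of $\widehat{\mathcal X}$ for generic $\alpha$, together with the invariance $\overline{\mathcal I}_p^j=\mathcal I_p$. If a generic point of $E_8\setminus D$ is mapped by some translation, or a composite of Weyl generators, to a point of $\mathbb{A}^2\setminus D$, then $\mathcal{I}_p$ is regular there. Indeed, $\mathcal I_p\circ T$ is a polynomial of the regular coordinates plus a parameter-polynomial correction, and regularity at $E_8$ then follows generically and extends to all parameters by polynomiality in $\alpha,t$. I would first try this second route, verifying which Weyl element maps $E_8$ into the affine chart, and keep the direct expansion as a check.
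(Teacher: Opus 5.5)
Your proposal is correct and is essentially the paper's proof. For $E_2$ and $E_4$ the paper substitutes the chart coordinates directly into $A(z)$; no gauge is needed, since every entry involves $f$ only through $fg$ and $f(fg+\alpha_2)=y_2$ (and symmetrically for $E_4$). For $E_8$ it uses your transport idea with $T_1$, which maps $E_8\setminus D$ onto $\overline{E}^1_4\setminus\overline{D}^1$, and then applies the $E_4$ case. Your variant landing in $\mathbb{A}^2$ also works: $s_0$ sends $E_8\setminus D$ onto the line $f-g+t=0$ when $\alpha_1+\alpha_2\neq 1$, and it fixes $\mathcal{I}_p$ by Theorem~\ref{thm:sym}.

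The one thing to discard is the uniform gauge reduction for $j=8$ with $R_8$ polynomial. The quantity $\operatorname{Tr}A(z)=-\alpha_1 g-\alpha_2 f+fg(g-f)=t\,x_8^{-2}+(t^2-1)x_8^{-1}+O(1)$ is invariant under any conjugation, so no gauge can make $A(z)$ regular at $E_8$. Moreover, the coefficient of $g^{2p}$ in $\mathcal{I}_p$ is $f^p$, so $\mathcal{I}_p$ has a pole along $g=\infty$ inside the $(x_8,y_8)$ chart, and $R_8$ can only be rational, as the paper notes.
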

\begin{proof} 
 Let us consider the integral of motion on $E_2$. By direct substitution of the coordinates $(x_2,y_2)$ that cover $E_2$ away from the anti-canonical divisor, into the coefficient matrix of the spectral equation, we get
\begin{equation*}
 A(z)=A_{E_2}(z)+\mathcal{O}(x_2),   
\end{equation*}
as $x_2\rightarrow 0$, where the error term is polynomial in $z,x_2,y_2,\alpha_1,\alpha_2,t$, and
\begin{align*}
    A_{E_2}(z)&=\begin{bmatrix}
        t(z-\alpha_2)(z+\alpha_2) & w(z+\alpha_2)\\
        w^{-1}Q_{E_2}(z)  & -t(z-\alpha_2)(z+\alpha_2)-y_2
    \end{bmatrix}\\
    Q_{E_2}(z)&=-t^2 z^3+(1+\alpha_2t^2)z^2+(t^2\alpha_2^2-\alpha_1-t\,y_2)z+\alpha_2 t(y_2-\alpha_2^2 t).
\end{align*}
In other words, under the parametrisation of the coefficient matrix $A(z)$, points on $E_2\setminus D$ correspond to coefficient matrices $A(z)$ such that the first row vanishes at $z=-\alpha_2$. It follows that
\begin{align*}
    \mathcal{I}_p&=P_2(y_2)+\mathcal{O}(x_2),\\
    P_2(y_2)&=\operatorname{Tr}[A_{E_2}(p-1)\cdot A_{E_2}(p-2)\cdot\ldots\cdot A_{E_2}(1)\cdot A_{E_2}(0)],
\end{align*}
as $x_2\rightarrow 0$, where the error term $\mathcal{O}(x_2)$ is a polynomial in $x_2,y_2,\alpha_1,\alpha_2,t$. This proves the theorem for the case $j=2$, with $R_2(x_2,y_2)$ a polynomial.

Similarly, by direct substitution of the coordinates $(x_4,y_4)$ that cover $E_4\setminus D$, we get
\begin{equation*}
 A(z)=A_{E_4}(z)+\mathcal{O}(x_4),   
\end{equation*}
as $x_4\rightarrow 0$, where
\begin{align*}
    A_{E_4}(z)&=\begin{bmatrix}
        t(z-\alpha_1)(z+\alpha_1)+y_4 & w(z-\alpha_1)\\
        w^{-1}Q_{E_4}(z)  & -t(z-\alpha_1)(z+\alpha_1)
    \end{bmatrix}\\
    Q_{E_4}(z)&=-t^2 z^3+(1-\alpha_1t^2)z^2+(t^2\alpha_1^2+\alpha_2-t\,y_4)z-\alpha_1 t(y_4-\alpha_1^2 t).
\end{align*}
This yields the theorem for $j=4$ as above, with $R_4(x_4,y_4)$ a polynomial.

We cannot employ the same method for $E_8$, since there the coefficients of $A(z)$ blow up. Instead, we use the fact that $\mathcal{I}_{p,\alpha}$ is an integral of motion, which in particular gives
\begin{equation}\label{eq:integral_equality}
    \mathcal{I}_{p,(\alpha_1,\alpha_2)}(f,g)=\mathcal{I}_{p,(\alpha_1+1,\alpha_2)}(\overline{f}^1,\overline{g}^1).
\end{equation}
We are interested in the left-hand side for $(f,g)\in E_8\setminus D$,  but this is equivalent to $(\overline{f}^1,\overline{g}^1)\in \overline{E}^1_4\setminus \overline{D}^1$. Indeed, a direct calculation yields that the coordinate chart $(x_8,y_8)$ covering $E_8\setminus D$ and the coordinate chart $(\overline{x}^1_4,\overline{y}^1_4)$ covering $\overline{E}^1_4\setminus\overline{D}^1$ are related by
\begin{equation*}
\overline{x}^1_4=-x_8+\mathcal{O}(x_8^2),\qquad \overline{y}^1_4=-y_8-2t+3t(\alpha_1+\alpha_2)-t^3+\mathcal{O}(x_8),
\end{equation*}
as $x_8\rightarrow 0$, and, conversely,
\begin{equation*}
x_8=-\overline{x}^1_4 +\mathcal{O}((\overline{x}^1_4)^2),\qquad
    y_8=-\overline{y}^1_4-2t+3t(\alpha_1+\alpha_2)-t^3+\mathcal{O}(\overline{x}^1_4),
\end{equation*}
as $\overline{x}^1_4\rightarrow 0$. Since we already know that the theorem holds for the right-hand side, i,e, when $j=4$, it thus also holds for the left-hand side, i.e. $j=8$. However, since the change of coordinates between $(x_8,y_8)$ and $(\overline{x}^1_4,\overline{y}^1_4)$ is birational, this time, $R_8(x_8,y_8)$ is a rational function.
This finishes the proof of the theorem.
\end{proof}

\begin{remark}\label{rem:conjhigherorder}
    By looking at explicit formulas of $\mathcal{I}_p$ for different values of $p$, we put forward the following conjectural formula for its highest order terms,
\begin{equation*}
   \mathcal{I}_p=\begin{cases}
   (fg^2-f^2g-t fg
   -\alpha_2f-\alpha_1g)^p+O(f,g) & \text{if $p\neq 2$,}\\
   (fg^2+f^2g+t fg
   +\alpha_2f+\alpha_1g)^2+t fg(f+g)+O(f,g)  & \text{if $p=2$,}
   \end{cases}
\end{equation*}
where $O(f,g)$ is a polynomial of degree at most $p-1$ in $f$ and at most $p-1$ in $g$. We checked this assertion in Magma for all primes $2\leq p \leq 53$. This conjectural formula suggests that \eqref{eq:integral_morphism} extends to a regular map from $\widehat{\mathcal{X}}$ to $\mathbb{P}_\mathbf{k}^1$, with the effective anti-canonical divisor given by the fibre $D=\mathcal{I}_p^{-1}(\infty)$.
\end{remark}

\subsection{Reducible fibres and reductions to Riccati equations}\label{sec:fibre_reducible}

In this section, we study fibres of the map $\mathcal{I}_p:\mathcal{X}\rightarrow \mathbb{A}_\mathbf{k}^1$, see equation \eqref{eq:integral_morphism}. We relate reducible fibres to special solutions of $\Pfour$ and $d\Ptwo$.

The fourth Painlev\'e equation over $\mathbb{C}$  admits classical function solutions when the parameters lie on an edge of a Weyl chamber, that is, one or more of $\alpha_1,\alpha_2$ and $\alpha_1+\alpha_2$ is integer \cite{noumibook}.

For example,
setting $\alpha_1=0$ and $f=0$, $\Pfour$ reduces to the following Riccati equation for $g$,
\begin{equation}\label{eq:riccati}
    D_t g=g(t-g)+\alpha_2.
\end{equation}
This equation is linearised by writing $g$ as a logarithmic derivative,
\begin{equation}\label{eq:linearisation}
   g=\frac{D_t r}{r},\qquad  D_t^2 r= t D_t r+\alpha_2 r.
\end{equation}
The general solution of the latter linear ODE can be expressed in terms of parabolic cylinder functions.

Next, we work in a finite characteristic $p$ and consider the corresponding integral of motion $\mathcal{I}_p$ when $\alpha_1=0$. In that case, the determinant of the coefficient matrix $A(z)$, see equation \eqref{eq:coefmatrixdet}, is
\begin{equation*}
    |A(z)|=-z^2(z+\alpha_2).
\end{equation*}
For the special value $f=0$, $z$ is an overall factor of the coefficient matrix $A(z)$. That is, the curve $f=0$ in the initial value space $\mathcal{X}$ parametrises those coefficient matrices $A(z)$ with $A(0)=0$. In particular, on this whole curve, the integral of motion $\mathcal{I}_{p}=0$. This means that $f$ must be a divisor of $\mathcal{I}_p|_{\alpha_1=0}$ in the polynomial ring $\mathbb{F}_p(t,\alpha_2)[f,g]$.
In particular, when $\alpha_1=0$, the integral of motion is a reducible polynomial in $(f,g)$. Since the integral of motion is furthermore invariant under translations, the same holds true whenever $\alpha_1\in\mathbb{F}_p\subseteq \mathbf{k}$.

Similarly, when $\alpha_2=0$, the curve $g=0$ in the initial value space $\mathcal{X}$ parametrises those coefficient matrices $A(z)$ that have an overall factor $z$, that is, $A(0)=0$. So, in this case, $g$ is a divisor of $\mathcal{I}_p|_{\alpha_2=0}$ and again, for general $\alpha_2\in\mathbb{F}_p\subseteq \mathbf{k}$, $\mathcal{I}_{p}$ is a reducible polynomial in $(f,g)$ over $\mathbb{F}_p(t,\alpha_1)$.

Finally, consider $\alpha_2=-\alpha_1$, so that the determinant of the coefficient matrix becomes
\begin{equation*}
    |A(z)|=-z(z-\alpha_1)^2.
\end{equation*}
In this case, the curve $fg-\alpha_1=0$ in the initial value space $\mathcal{X}$ parametrises those coefficient matrices $A(z)$ that have an overall factor $z-\alpha_1$, i.e. $A(\alpha_1)=0$. In this case, again one of the fibres of $\mathcal{I}_{p}$ is reducible. Computations in Magma suggest that these are all the ways that reducible fibres may appear in $\mathcal{I}_{p}$. Namely, we have the following conjecture.

\begin{conjecture}\label{conj:irred}
Let $\mathbf{k}$ be a field of finite characteristic $p>0$, then the morphism $\mathcal{I}_p:\mathcal{X}\rightarrow \mathbb{A}_\mathbf{k}^1$, see equation \eqref{eq:integral_morphism},
has a reducible fibre if and only if $\alpha_1\in\mathbb{F}_p$, $\alpha_2\in\mathbb{F}_p$ or $\alpha_1+\alpha_2\in\mathbb{F}_p$. In such case, there is only one reducible fibre, given by
\begin{equation*}
    \mathcal{I}_\alpha^{-1}(c),\quad c=\begin{cases}
    0 & \text{if $\alpha_1\in\mathbb{F}_p$ or $\alpha_2\in\mathbb{F}_p$,}\\
        (\alpha_1-\alpha_1^p)t^p & \text{if }\alpha_1+\alpha_2\in \mathbb{F}_p, p\neq 2,\\
        (\alpha_1+\alpha_1^2)(1+t)^2 & \text{if }\alpha_1+\alpha_2\in \mathbb{F}_p, p= 2.
    \end{cases}
\end{equation*}
\end{conjecture}

We checked the statement of the conjecture in Magma for $\mathbf{k}=\mathbb{F}_q$, with 
\begin{equation*}
    q\in\{2^5,3^4,5^2,7^2,11,13,17,19,23,29\},
\end{equation*}
 for all combinations of parameter values of $\alpha_1,\alpha_2,t\in \mathbf{k}$.

As an example, setting $\alpha_1=1$, the integral $\mathcal{I}_3$ factorises as
\begin{align*}
   \mathcal{I}_3=&(2 \alpha_2+2 t g+2 f g+g^2)\times (\alpha_2^2 f^3+2 \alpha_2 t f^3 g+2 \alpha_2 f^4 g+\alpha_2 f^3 g^2+2 \alpha_2 f\\
   &+t^2 f^3 g^2+2 t^2 f+2 t f^4 g^2+t f^3 g^3+2 t+f^5 g^2+f^4 g^3+f^3 g^4+2 f+2 g).
\end{align*}

\subsection{Rational solutions} \label{sec:ratsol}
In this section, we study reductions modulo primes of one of the three families of solutions of $\Pfour$ expressible in terms of Hermite polynomials. We also comment on reductions of the family of Okamoto rationals.

We again set $\alpha_1=0$ and $f=0$, so that $\Pfour$ reduces to the Riccati equation \eqref{eq:riccati}, which can be linearised as in \eqref{eq:linearisation}.
When $\alpha_2=-n\in\mathbb{Z}_{\leq 0}$, the linear ODE in \eqref{eq:linearisation} has the $n$th probalistic Hermite polynomial as a solution,
\begin{equation*}
    r=H_n,\quad H_n=(-1)^n e^{\frac{t^2}{2}}D_t^n e^{-\frac{t^2}{2}}.
\end{equation*}
This yields the semi-infinite family
\begin{equation}\label{eq:famedge1}
    (\alpha_1,\alpha_2,f,g)=\left(0,-n,0,\frac{D_t H_n}{H_n}\right),\qquad (n\in\mathbb{Z}_{\geq 0}),
\end{equation}
of solutions to $\Pfour$. By repeated application of the translation $T_1^{-1}$ to its members, we obtain the family of Hermite rationals,
\begin{equation}\label{eq:famHermite}
    (\alpha_1,\alpha_2,f,g)=\left(-m,-n,f_{m,n},g_{m,n}\right),\qquad (m,n\in\mathbb{Z}_{\geq 0}),
\end{equation}
expressible in terms of Wronskians of Hermite polynomials \cites{noumiyamada99,okamoto86}.

Similar to the subfamily \eqref{eq:famedge1} for $m=0$, we have the following subfamily when $n=0$,
\begin{equation}\label{eq:famedge2}
    (\alpha_1,\alpha_2,f,g)=\left(-m,0,\frac{D_t \widetilde{H}_n}{\widetilde{H}_n},0\right),\qquad (m\in\mathbb{Z}_{\geq 0}),
\end{equation}
where
\begin{equation*}
    \widetilde{H}_n=i^{-n}H_n|_{t\mapsto i t}.
\end{equation*}
Here we note that both $H_n$ and $\widetilde{H}_n$ are monic of degree $n$ with integers coefficients. Fixing any prime $p$, we thus have well-defined reductions of these Hermite polynomials modulo $p$.

For the sake of simplicity, assume $p\neq 2$. Then we have the following identity in characteristic $p$,
\begin{equation*}
    (D_t+t)^p=D_t^p+t^p,
\end{equation*}
among operators, see e.g. \cite{bavula}.
Since $e^{-\frac{t^2}{2}}$ formally lies in the kernel of $D_t+t$, we get
\begin{equation*}
    D_t^p e^{-\frac{t^2}{2}}=-t^p e^{-\frac{t^2}{2}},
\end{equation*}
and therefore $H_{n+p}=t^p H_n$ and $\widetilde{H}_{n+p}=t^p \widetilde{H}_n$ for $n\geq 0$. In particular,
  logarithmic derivatives of these Hermite polynomials in characteristic $p$ are periodic in the index with period $p$,
  \begin{equation*}
   \frac{D_t H_{n+p}}{H_{n+p}}=\frac{D_t H_n}{H_n},\qquad    \frac{D_t \widetilde{H}_{n+p}}{\widetilde{H}_{n+p}}=\frac{D_t \widetilde{H}_n}{\widetilde{H}_n}\qquad (n\geq 0).
  \end{equation*}
This means that the two boundary subfamilies \eqref{eq:famedge1} and \eqref{eq:famedge2} have well-defined reductions modulo $p$. 

Since the translations $T_1$ and $T_2$, as well as their inverses, given by
\begin{equation*}
\begin{aligned}
    \underline{f}_1&=-\frac{\alpha_1+\alpha_2-1+g (f-g+t)}{f-g+t}, & 
     \underline{g}_1&=+\frac{(f-g+t) (\alpha_2+g (f-g+t))}{\alpha_1+\alpha_2-1+g (f-g+t)},\\
     \underline{g}_2&=-\frac{\alpha_1+\alpha_2-1+f (f-g+t)}{f-g+t}, &     \underline{f}_2&=-\frac{(f-g+t) (\alpha_1+f (f-g+t))}{\alpha_1+\alpha_2-1+f (f-g+t)},
\end{aligned}
\end{equation*}
where $\underline{(\cdot)}_j=T_j^{-1}(\cdot)$, for $j=1,2$, are also well-defined modulo $p$, it then follows that the whole family of Hermite rationals \eqref{eq:famHermite} has a good reduction modulo $p$. 

Denoting, for any rational function $h\in \mathbb{Q}(t)$, its reduction modulo $p$ by $\tilde{h}\in \mathbb{F}_p(t)$, when well-defined,
 we have the following periodicity properties for the family of Hermite rationals \eqref{eq:famHermite},
\begin{equation*}
    \tilde{f}_{m+p,n}=\tilde{f}_{m,n+p}=\tilde{f}_{m,n},\qquad \tilde{g}_{m+p,n}=\tilde{g}_{m,n+p}=\tilde{g}_{m,n}\qquad  (m,n\in\mathbb{Z}_{\geq 0}).
\end{equation*}

In other words, we obtain the reduced family of rational solutions
\begin{equation}\label{eq:famHermitered}
    (\alpha_1,\alpha_2,f,g)=\left(-m,-n,\tilde{f}_{m,n},\tilde{g}_{m,n}\right),\qquad (m,n\in\mathbb{F}_p),
\end{equation}
of $\Pfour$ over $\mathbf{k}=\mathbb{F}_p$.

We further note that a similar calculation gives the same result when $p=2$. In that case, the family effectively only has four members, given in Table \ref{tab:hermitep2}.
\begin{table}[H]
    \centering
    \begin{tabular}{c|c|c}
     $m$\textbackslash $n$    & $0$ & $1$  \\\hline
    $0$  & $(0,0)$ & $\left(0,\frac{1}{t}\right)$  \\\hline
    $1$ & $\left(\frac{1}{t},0\right)$ & $\left(\frac{1}{t},\frac{1}{t}\right)$
    \end{tabular}
    \caption{Formulas for members $(\tilde{f}_{m,n},\tilde{g}_{m,n})$ of \eqref{eq:famHermitered} with $p=2$.}
    \label{tab:hermitep2}
\end{table}
When both $\alpha_1=0$ and $\alpha_2=0$, and thus also $\alpha_1+\alpha_2=0$, the three corresponding reducible fibres, see Conjecture \ref{conj:irred}, coalesce and $\mathcal{I}_2$ factors over $\mathbb{F}_2$ into three irreducible factors,
\begin{equation*}
  \mathcal{I}_2=f\, g\, (f^3 g+f g^3+t^2 f g +
    t f +t\,  g +t^2 +1).
\end{equation*}
In particular, we see that two of the factors vanish for the seed solution $(f,g)=(0,0)$ of the family \eqref{eq:famHermitered} with $p=2$.

The same holds true for any prime $p$, since we know that $f$ and $g$ are factors of $\mathcal{I}_p$ when $\alpha_1=\alpha_2=0$, as explained in Section \ref{sec:fibre_reducible}, and the seed solution of \eqref{eq:famHermite} is $(f,g)=(0,0)$. More generally, for any choice of $(\alpha_1,\alpha_2)=(-m,-n)\in\mathbb{F}_p^2$, the integral admits a factorisation into three factors, and two of them vanish identically in $t$ at $(f,g)=(\tilde{f}_{m,n},\tilde{g}_{m,n})$. We discuss examples of this for $p=3$ and $p=5$.


When $p=3$, the family \eqref{eq:famHermitered} has nine members, given in Table \ref{tab:hermitep3}.
\begin{table}[H]
    \centering
    \begin{tabular}{c|c|c|c}
     $m$\textbackslash $n$    & $0$ & $1$ & $2$ \\\hline
    $0$  & $(0,0)$ & $\left(0,\frac{1}{t}\right)$ & $\left(0,\frac{2t}{t^2+2}\right)$  \\\hline
    $1$ & $\left(\frac{1}{t},0\right)$ & $\left(\frac{t^2+1}{t(t^2+2)},\frac{t^2+2}{t(t^2+1)}\right)$ & $\left(\frac{t}{t^2+2},\frac{2(t^2+1)}{t(t^2+2)}\right)$  \\\hline
    $2$     & $\left(\frac{2t}{t^2+1},0\right)$ & $\left(\frac{2(t^2+2)}{t(t^2+1)},\frac{t}{t^2+1}\right)$  & $\left(\frac{2}{t},\frac{2}{t}\right)$
    \end{tabular}
    \caption{Formulas for members $(\tilde{f}_{m,n},\tilde{g}_{m,n})$ of \eqref{eq:famHermite} with $p=3$.}
   \label{tab:hermitep3}
\end{table}
Setting $m=1$ and $n=2$, so that $\alpha_1=2$ and $\alpha_2=1$, the integral $\mathcal{I}_3$ factorises as
\begin{align*}
    \mathcal{I}_3=&2 (f g+1) \left(f^2-f g+t f+2\right)\times\\
    &\left(f^3 g^2+f^2 g^3+f g^4-t f^2 g^2+t f g^3+t^2 f g^2- f^2 g+g^3+t f g+f- t^2 g-t\right).
\end{align*}
The last two factors vanish simultaneously only at $(f,g)=(\tilde{f}_{1,2},\tilde{g}_{1,2})$.

When $p=5$, the family \eqref{eq:famHermitered} has twenty-five members, with a selection of them given in Table \ref{tab:hermitep5}.

\begin{table}[H]
    \centering
    \begin{tabular}{|c|c|c|}\hline
     $m$\textbackslash $n$    & $0$ & $1$ \\\hline
    $0$  & $(0,0)$ & $\left(0,\frac{1}{t}\right)$ \\\hline
    $1$  & $\left(\frac{1}{t},0\right)$ & $\left(\frac{t^2+1}{t (t^2+4)},\frac{t^2+4}{t (t^2+1)}\right)$  \\\hline
    $2$  & $\left(\frac{2t}{t^2+1},0\right)$ & $\left(\frac{2t(t^2+3)(t^2+4)}{(t^2+1)(t^4+3)},\frac{t^4+3}{t(t^2+1)(t^2+3)}\right)$  \\\hline
    $3$  & $\left(\frac{3(t^2+1)}{t(t^2+3)},0\right)$ & $\left(\frac{3(t^2+2)(t^2+4)(t^4+3)}{t(t^2+3)(t^6+3 t^4+4 t^2+1)},\frac{t^6+3 t^4+4 t^2+1}{t(t^2+2)(t^2+3)(t^2+4)}\right)$   \\\hline
    $4$  & $\left(\frac{4t(t^2+3)}{(t^2+2)(t^2+4)},0\right)$ & $\left(\frac{4(t^6+3 t^4+4 t^2+1)}{t(t^2+2)(t^2+3)(t^2+4)},\frac{t(t^2+3)}{(t^2+2)(t^2+4)}\right)$ \\\hline
    \end{tabular}
    \caption{Formulas for a selection of the members $(\tilde{f}_{m,n},\tilde{g}_{m,n})$ of \eqref{eq:famHermite} with $p=5$.}
   \label{tab:hermitep5}
\end{table}


Setting $m=2$ and $n=1$, so that $\alpha_1=3$ and $\alpha_2=4$, the integral $\mathcal{I}_5$ factorises as
\begin{align*}
    \mathcal{I}_5=\,&4 (f^2 g^2+4 f g^3+f g^2 t+3 f g+3 g^2+4 g t+1)\times \\
    &(f^3 g+3 f^2 g^2+2 f^2 g t+4 f^2+f g^3+3 f g^2 t+f g t^2+4 f g+2 f t+2 g^2+3 t^2+1)\times \\
    &(f^5 g^2+3 f^4 g^3+2 f^4 g^2 t+3 f^4 g+f^3 g^4+3 f^3 g^3 t+f^3 g^2 t^2+f^3 g^2+f^3 g t+f^3\\
    &\,+f^2 g^3+f^2 g^2 t+3 f^2 g t^2+3 f^2 g+2 f^2 t+2 f g^2+3 f g t+2 f t^2+f+3 g+t).
\end{align*}
The first and third factors vanish simultaneously only at $(f,g)=(\tilde{f}_{2,1},\tilde{g}_{2,1})$.

We finish this section, briefly discussing the case of Okamoto rationals.
At centers of the Weyl chambers in parameter space, i.e. $\alpha_1-\tfrac{1}{3}, \alpha_2-\tfrac{1}{3}\in\mathbb{Z}$, the fourth Painlev\'e equation over $\mathbb{Q}$ admits unique rational solutions known as Okamoto rationals, see \cites{noumibook,okamoto86}.
They can be generated by the translations $T_1$ and $T_2$, starting from the seed solution
\begin{equation}\label{eq:seedokamoto}
    f=-\tfrac{1}{3}t,\quad g=\tfrac{1}{3}t,\quad \alpha_1=\alpha_2=\tfrac{1}{3},
\end{equation}
yielding the family of Okamoto rationals
\begin{equation}\label{eq:famokamoto}
(\alpha_1,\alpha_2,f,g)=(m+\tfrac{1}{3},n+\tfrac{1}{3},f_{m,n}^{\text{Ok}},g_{m,n}^{\text{Ok}})\qquad (m,n\in\mathbb{Z}).
\end{equation}
Fixing any prime $p\neq 3$, the family has a good reduction modulo $p$, since the seed solution does so and the parameters do not lie on edges of Weyl chambers.



Like in the case of Hermite rationals, the family of Okamoto rationals also seem to be periodic in the indices with period $p$ after reducing it modulo $p$, 
\begin{equation*}
    (\tilde{f}_{(m+p,n)}^{\text{Ok}},\tilde{g}_{(m+p,n)}^{\text{Ok}})=(\tilde{f}_{(m,n)}^{\text{Ok}},\tilde{g}_{(m,n)}^{\text{Ok}}),\quad (\tilde{f}_{(m,n+p)}^{\text{Ok}},\tilde{g}_{(m,n+p)}^{\text{Ok}})=(\tilde{f}_{(m,n)}^{\text{Ok}},\tilde{g}_{(m,n)}^{\text{Ok}}).
\end{equation*}
We checked this periodicity in Mathematica for $p=2,5,7,11$ and $p=37$, but will not attempt to prove it here. In Table \ref{tab:okamotop5} reductions of some of the Okamoto rationals modulo $5$ are given.

\begin{table}[H]
    \centering
    \begin{tabular}{|c|c|c|}\hline
     $m$\textbackslash $n$    & $0$ & $1$  \\\hline
    $0$  & $\left(3t,2t\right)$ & $\left(\frac{3 t (t^2+3)}{t^2+2},\frac{2 (t^2+2 t+4) (t^2+3 t+4)}{t (t^2+2)}\right)$ \\\hline
    $1$  & $\left(\frac{3 (t^2+t+1) (t^2+4 t+1)}{t (t^2+3)},\frac{2 t (t^2+2)}{t^2+3}\right)$ &  $\left(\frac{3 t (t^2+2)}{t^2+3},\frac{2 t (t^2+3)}{t^2+2}\right)$ \\\hline
    $2$  & $\left(\frac{3 (t^2+3) (t^2+2 t+4) (t^2+3 t+4)}{t (t^2+t+1) (t^2+4 t+1)},\frac{2 (t^2+t+1) (t^2+4 t+1)}{t (t^2+3)}\right)$ &  $\left(\frac{3 (t^2+2)}{t},2t\right)$ \\\hline
    $3$  & $\left(\frac{3 t^7}{(t^2+2) (t^2+2 t+4) (t^2+3 t+4)},\frac{2 (t^2+2 t+4) (t^2+3 t+4)}{t (t^2+2)}\right)$ &  $\left(\frac{3 t^5}{(t^2+2) (t^2+3)},\frac{2 t (t^2+2)}{t^2+3}\right)$ \\\hline
    $4$  & $\left(\frac{3 (t^2+2)}{t},\frac{2 t (t^2+3)}{t^2+2}\right)$ &  $\left(\frac{3 (t^2+3)}{t},\frac{2 (t^2+t+1) (t^2+4 t+1)}{t (t^2+3)}\right)$ \\\hline
    \end{tabular}
    \caption{Formulas for a selection of the members $(\tilde{f}_{(m,n)}^{\text{Ok}},\tilde{g}_{(m,n)}^{\text{Ok}})$ of the family \eqref{eq:famokamoto} reduced modulo $p=5$.}
   \label{tab:okamotop5}
\end{table}

The corresponding values of the integrals of motions $\mathcal{I}_p$, for some primes $p$, are given by
\begin{align*}
    \mathcal{I}_2&=(1+t)^6, & \mathcal{I}_5&=3t^{15}, &
    \mathcal{I}_7&=6t^{21}, &
    \mathcal{I}_{11}&=9t^{33},\\
   \mathcal{I}_{13}&=t^{39}, & \mathcal{I}_{17}&=12t^{51}, &
    \mathcal{I}_{19}&=12t^{57}, &
    \mathcal{I}_{23}&=6t^{69}.
\end{align*}

\subsection{Singularities on fibres and sporadic algebraic solutions}\label{subsec:singularities}
Ohyama \cite{ohyama2010} showed that $q\Pone$ admits algebraic solutions when $q$ is a root of unity. In \cite{JR2025} it was observed that these algebraic solutions correspond precisely to singularities on fibres of integrals of motion constructed in that paper.

We find that a similar phenomenon takes place in the current context. Namely, singularities on fibres define algebraic solutions of $\Pfour$ and $d\Ptwo$.

Take for example the integral of motion $\mathcal{I}_2$ in characteristic $2$. Imposing that the curve $\mathcal{I}_2=c$ has a singularity, i.e. that the gradient of $\mathcal{I}_2$ with respect to $(f,g)$ vanishes at a point on the curve, leads to the system of three equations,
\begin{align}
    &\mathcal{I}_2=c,\label{eq:alg0}\\
    &tf^2+t^2 f+f+\alpha_1 t=0,\label{eq:alg1}\\
    &t\,g^2+t^2\, g+g+\alpha_2 t=0.\label{eq:alg2}
\end{align}
In characteristic $2$, $\Pfour$ decouples into two Riccati equations,
\begin{equation}\label{eq:piv_char2}
    D_t f=f^2+t f+\alpha_1,\quad D_t g=g^2+t g+\alpha_2,
\end{equation}
and a direct calculation shows that \eqref{eq:alg1} defines an algebraic solution to the first ODE and \eqref{eq:alg2} defines an algebraic solution to the second ODE. Remarkably, however,  $d\Ptwo$ does not seem to decouple correspondingly.

Simplifying \eqref{eq:alg0} using \eqref{eq:alg1} and \eqref{eq:alg2}, gives the identity
\begin{equation*}
    (tf+\alpha_1)(t\,g+\alpha_2)+\frac{c}{t^2+1}=0.
\end{equation*}
Taking repeated resultants of the polynomials on the left-hand sides of the above equation and equations \eqref{eq:alg1} and \eqref{eq:alg2}, with respect to $f$ and $g$, then yields the following quartic constraint on $c$,
\begin{gather}
P=0,\nonumber\\
    P:=c^4+(1+t^2)(1+t^4)c^3+(\alpha_1+\alpha_1^2+\alpha_2+\alpha_2^2)(1+t^8)c^2+\label{eq:cconstraint}\\
    \alpha_1 \alpha_2(1+\alpha_1)(1+\alpha_2)(1+t^2)(1+t^8)c+\alpha_1^2\alpha_2^2(1+\alpha_1^2)(1+\alpha_2^2)(1+t^8),\nonumber
\end{gather}
as well as the following explicit formulas for $f$ and $g$ in terms of $c$,
\begin{equation}\label{eq:algsol}
\begin{aligned}
    f&=\frac{c^2+\alpha _1 \left(t^4+1\right) \left(c+\alpha _2 \left(\alpha _2+1\right) \left(\alpha _1+t^2\right)\right)}{t (t+1)^4 \left(c+\alpha _2 \left(\alpha _2+1\right) \left(t^2+1\right)\right)},\\
    g&=\frac{c^2 \left(\alpha _2+t^2+1\right)+\alpha _2 \left(\alpha _2+1\right) \left(t^4+1\right) \left(\alpha _1 \left(\alpha _1+1\right) \alpha _2+c\right)}{t \left(c^2+\alpha _1 \left(\alpha _1+1\right) \alpha _2^2 \left(\alpha _2+1\right) \left(t^4+1\right)\right)}.
\end{aligned}
\end{equation}
So $f$ and $g$ lie in the algebraic field extension of $\mathbb{F}_2(t,\alpha_1,\alpha_2)$ defined by the irreducible polynomial $P$ in \eqref{eq:cconstraint}.

Taking the formal derivative of \eqref{eq:cconstraint} with respect to $t$ gives
\begin{equation*}
    (c^2+(1+t^4)\alpha_1\alpha_2(1+\alpha_1)(1+\alpha_2))D_t c=0.
\end{equation*}
The quadratic factor on the left and the quartic equation \eqref{eq:cconstraint} have no common root over $\mathbb{F}_2(t,\alpha_1,\alpha_2)$ and the derivative $D_t$ thus extends uniquely to $\mathbb{F}_2(t,\alpha_1,\alpha_2)[c]/(P)$ with $D_t c=0$. Direct substitution of \eqref{eq:algsol} into \eqref{eq:piv_char2} re-affirms that it defines an algebraic solution to $\Pfour$.

Similarly, one can check that the quartic constraint \eqref{eq:cconstraint} is invariant under $\alpha_1\mapsto \alpha_1+1$ and $\alpha_2\mapsto \alpha_2+1$, so that we may choose $c$ invariant under these shifts, as expected. Then \eqref{eq:algsol} also defines an algebraic solution of $d\Ptwo$. 

Applying the same procedure in characteristic $p=3$, we impose that the curve $\mathcal{I}_3=c$ has a singularity, which leads to a system of three equations, that can be written as
\begin{align}
    \mathcal{I}_3&=c,\nonumber\\
    f&=\frac{t+g}{t^2+\alpha_2-\alpha_1-1}+t-g+\frac{\alpha_2}{g},\nonumber\\
    g&=\frac{t-f}{t^2+\alpha_2-\alpha_1+1}-t-f-\frac{\alpha_1}{f}.\label{eq:char3g}
\end{align}
Substituting the formula for $g$ into the formula for $f$ and vice versa leads to the following algebraic equations for $f$ and $g$ over $\mathbb{F}_3(t,\alpha_1,\alpha_2)$,
\begin{align*}
    & t \left(t^2+\alpha_2-\alpha_1-1\right) \left(t^2+\alpha_2-\alpha_1\right)f^3+ \left(t^6+\alpha_1-\alpha_2+\alpha_2^3-\alpha_1^3\right)f^2\\
    &+\alpha_1^2 \left(t^2+\alpha_2-\alpha_1+1\right)^2=0,\\
    & t \left(t^2+\alpha_2-\alpha_1+1\right) \left(t^2+\alpha_2-\alpha_1\right)g^3- \left(t^6+\alpha_1-\alpha_2+\alpha_2^3-\alpha_1^3\right)g^2\\
    &+\alpha_2^2 \left(t^2+\alpha_2-\alpha_1-1\right)^2=0.
\end{align*}
The first of these equations, can be taken together with the formula for $g$ in equation \eqref{eq:char3g}, to define a simultaneous algebraic solution to $\Pfour$ and $d\Ptwo$ that lies in a degree three field extension of $\mathbb{F}_3(t,\alpha_1,\alpha_2)$. Correspondingly, $c$ also satisfies a degree three polynomial equation over $\mathbb{F}_3(t,\alpha_1,\alpha_2)$.

The same procedure can in principle be applied in any characteristics $p$, but explicit formulas become unwieldy already for relatively small primes $p$. It seems unlikely that the resulting algebraic solutions are reductions modulo $p$ of certain solutions of $\Pfour$ in characteristic $0$, like the Hermite rationals and Okamoto rationals discussed in the previous section. We therefore call these new algebraic solutions sporadic.

\section{Projective reduction of integrals to $d\Pone$}\label{sec:projred}
It is well-known that, on special hyperplanes in parameter space, Painlev\'e IV admits special translations with step-size a half, whose square equals $T_1,T_2$ or another composition of those two translations and their inverses. Reductions of the dynamics to such hyperplanes are known as projective reductions \cite{KNT11}. In the case under consideration, the dynamics defined by these special translations are governed by the so called discrete first Painlev\'e equation, or $d\Pone$. We note that the discovery of $d\Pone$ predates the notion of projective reductions  \cite{fokasitskitaev91}.

In this section, we show that the integrals of motion of $\Pfour$ and $d\Ptwo$ reduce to integrals of motion for $d\Pone$ under a projective reduction.

\subsection{A projective reduction to $d\Pone$}
Consider the element $s_2T_2$ of the Weyl group, which acts as
\begin{equation*}
    s_2T_2:\quad (\alpha_1,\alpha_2,f,g)\mapsto \left(\alpha_1+\alpha_2,1-\alpha_2,-g+\frac{\alpha_1}{f},f-g+t+\frac{\alpha_1}{f}\right).
\end{equation*}
It is a square root of the translation $T_1$ in $\widetilde W(A_2^{(1)})$, that is,
\begin{equation*}
    (s_2T_2)^2=T_1.
\end{equation*}
Such an element is also known as a quasi-translation \cites{shivariations19,shitranslations25}.

From here on, we work in odd characteristic $p$, so that $2$ is invertible. On the hyperplane $\alpha_2=\tfrac{1}{2}$ in parameter space, $s_2T_2$ defines a translation of $\alpha_1$ by $\tfrac{1}{2}$. Writing $\alpha_1=\tfrac{1}{2}\beta$, it becomes
\begin{equation*}
    T_{d\Pone}:\quad(\beta,f,g)\mapsto \left(\beta+1,-g+\frac{\beta}{2f},f-g+t+\frac{\beta}{2f}\right).
\end{equation*}
Writing $T_{d\Pone}(h)=\hat{h}$ and $T_{d\Pone}^{-1}(h)=\underaccent{\check}{h}$, we can write the dynamical system induced by $T_{d\Pone}$ as
\begin{equation*}
    \hat{f}=-g+\frac{\beta}{2f},\qquad \hat{g}=f-g+t+\frac{\beta}{2f}.
\end{equation*}
This yields $g=f-\underaccent{\check}{f}+t$ and
\begin{equation*}
    \hat{f}+f+\underaccent{\check}{f}=\frac{\beta}{2f}-t,
\end{equation*}
which is also known as discrete Painlev\'e I or $d\Pone$ \cite{fokasitskitaev91}.

\subsection{Reduction of integrals}
We now consider reductions of the integrals of motion to the hyperplane $\alpha_2=\tfrac{1}{2}$,
\begin{equation*}
    \mathcal{J}_{p}:=\mathcal{I}_{p}|_{(\alpha_1,\alpha_2)=(\frac{1}{2}\beta,\frac{1}{2})},
\end{equation*}
for odd primes $p$.
then Theorem \ref{thm:invariance} and Theorem \ref{thm:sym} imply
\begin{equation*}
    T_{d\Pone}(\mathcal{J}_{p})=\mathcal{J}_{p}-t^p(\alpha_2^p-\alpha_2)=\mathcal{J}_{p},
\end{equation*}
where the last equality follows from the fact that $\alpha_2^p=\alpha_2$ when $\alpha_2=\tfrac{1}{2}$ by Fermat's little theorem.
So, we see that the integral of motion $\mathcal{I}_p$ of $d\Ptwo$ reduces to an integral of motion $\mathcal{J}_{p}$ of $d\Pone$.

When $p=3$, we have the following explicit formula for the integral,
\begin{align*}
    \mathcal{J}_{3}=&\mathcal{J}_{3}^{(1)}\mathcal{J}_{3}^{(2)},\\
    \mathcal{J}_{3}^{(1)}=&\,t^2 \beta +t^2 \underaccent{\check}{f} f+t^2+2 t \beta  \underaccent{\check}{f}+2 t \beta  f+2 t \underaccent{\check}{f}^2 f+2 t \underaccent{\check}{f} f^2+t \underaccent{\check}{f}+\beta  \underaccent{\check}{f}^2+2 \beta  \underaccent{\check}{f} f+\beta  f^2\\
    &+\underaccent{\check}{f}^3 f+2 \underaccent{\check}{f}^2 f^2+\underaccent{\check}{f} f^3+2 \underaccent{\check}{f} f+2 f^2+2,\\
    \mathcal{J}_{3}^{(2)}=&\,t \beta^2+2 t \beta  \underaccent{\check}{f} f+2 t \beta +t \underaccent{\check}{f}^2 f^2+2 t \underaccent{\check}{f} f+\beta^2 \underaccent{\check}{f}+\beta^2 f+2 \beta  \underaccent{\check}{f}^2 f+2 \beta  \underaccent{\check}{f} f^2+\beta  f\\
    &+\underaccent{\check}{f}^3 f^2+\underaccent{\check}{f}^2 f^3+\underaccent{\check}{f} f^2+f.
\end{align*}
Direct computation further shows that
\begin{equation*}
    T_{d\Pone}(\mathcal{J}_{3}^{(1)})=\frac{2}{f}\mathcal{J}_{3}^{(2)},\qquad
    T_{d\Pone}(\mathcal{J}_{3}^{(2)})=2f\mathcal{J}_{3}^{(1)}.
\end{equation*}

Similarly, we obtain the following factorised form for $\mathcal{J}_{5}$,
\begin{align*}
    \mathcal{J}_{5}=&\mathcal{J}_{5}^{(1)}\mathcal{J}_{5}^{(2)},\\
    \mathcal{J}_{5}^{(1)}=&\,2 t^3 \beta^2+2 t^3 \beta \underaccent{\check}{f} f+3 t^3 \beta+3 t^3 \underaccent{\check}{f}^2 f^2+4 t^3 \underaccent{\check}{f} f+t^3+t^2 \beta^2 \underaccent{\check}{f}+t^2 \beta^2 f+t^2 \beta \underaccent{\check}{f}^2 f\\
    &+t^2 \beta \underaccent{\check}{f} f^2+t^2 \beta \underaccent{\check}{f}+2 t^2 \beta f+4 t^2 \underaccent{\check}{f}^3 f^2+4 t^2 \underaccent{\check}{f}^2 f^3+3 t^2 \underaccent{\check}{f}^2 f+t^2 \underaccent{\check}{f} f^2+t^2 \underaccent{\check}{f}\\
    &+3 t^2 f+t \beta^2 \underaccent{\check}{f}^2+2 t \beta^2 \underaccent{\check}{f} f+t \beta^2 f^2+t \beta \underaccent{\check}{f}^3 f+2 t \beta \underaccent{\check}{f}^2 f^2+3 t \beta \underaccent{\check}{f}^2+t \beta \underaccent{\check}{f} f^3\\
    &+3 t \beta \underaccent{\check}{f} f+t \beta+4 t \underaccent{\check}{f}^4 f^2+3 t \underaccent{\check}{f}^3 f^3+4 t \underaccent{\check}{f}^3 f+4 t \underaccent{\check}{f}^2 f^4+4 t \underaccent{\check}{f}^2 f^2+3 t \underaccent{\check}{f} f+4 t f^2\\
    &+2 t+2 \beta^2 \underaccent{\check}{f}^3+\beta^2 \underaccent{\check}{f}^2 f+\beta^2 \underaccent{\check}{f} f^2+2 \beta^2 f^3+2 \beta \underaccent{\check}{f}^4 f+\beta \underaccent{\check}{f}^3 f^2+\beta \underaccent{\check}{f}^2 f^3+\beta \underaccent{\check}{f}^2 f\\
    &+2 \beta \underaccent{\check}{f} f^4+2 \beta \underaccent{\check}{f} f^2+\beta \underaccent{\check}{f}+\beta f^3+\beta f+3 \underaccent{\check}{f}^5 f^2+4 \underaccent{\check}{f}^4 f^3+4 \underaccent{\check}{f}^3 f^4+3 \underaccent{\check}{f}^3 f^2\\
    &+3 \underaccent{\check}{f}^2 f^5+\underaccent{\check}{f}^2 f^3+\underaccent{\check}{f}^2 f+3 \underaccent{\check}{f} f^4+3 \underaccent{\check}{f} f^2+4 \underaccent{\check}{f}+2 f^3,\\
    \mathcal{J}_{5}^{(2)}=&\,t^2 \beta^3+4 t^2 \beta^2 \underaccent{\check}{f} f+t^2 \beta^2+2 t^2 \beta \underaccent{\check}{f}^2 f^2+t^2 \beta \underaccent{\check}{f} f+3 t^2 \beta+2 t^2 \underaccent{\check}{f}^3 f^3+4 t^2 \underaccent{\check}{f}^2 f^2\\
    &+4 t^2 \underaccent{\check}{f} f+2 t \beta^3 \underaccent{\check}{f}+2 t \beta^3 f+3 t \beta^2 \underaccent{\check}{f}^2 f+3 t \beta^2 \underaccent{\check}{f} f^2+t \beta^2 \underaccent{\check}{f}+3 t \beta^2 f+4 t \beta \underaccent{\check}{f}^3 f^2\\
    &+4 t \beta \underaccent{\check}{f}^2 f^3+t \beta \underaccent{\check}{f}^2 f+3 t \beta \underaccent{\check}{f} f^2+3 t \beta f+4 t \underaccent{\check}{f}^4 f^3+4 t \underaccent{\check}{f}^3 f^4+4 t \underaccent{\check}{f}^3 f^2+2 t \underaccent{\check}{f}^2 f^3\\
    &+4 t \underaccent{\check}{f} f^2+2 t f+\beta^3 \underaccent{\check}{f}^2+2 \beta^3 \underaccent{\check}{f} f+\beta^3 f^2+4 \beta^2 \underaccent{\check}{f}^3 f+3 \beta^2 \underaccent{\check}{f}^2 f^2+4 \beta^2 \underaccent{\check}{f} f^3\\
    &+2 \beta^2 \underaccent{\check}{f} f+2 \beta^2 f^2+2 \beta^2+2 \beta \underaccent{\check}{f}^4 f^2+4 \beta \underaccent{\check}{f}^3 f^3+2 \beta \underaccent{\check}{f}^2 f^4+2 \beta \underaccent{\check}{f}^2 f^2+2 \beta \underaccent{\check}{f} f^3\\
    &+3 \beta \underaccent{\check}{f} f+3 \beta f^2+3 \beta+2 \underaccent{\check}{f}^5 f^3+4 \underaccent{\check}{f}^4 f^4+2 \underaccent{\check}{f}^3 f^5+3 \underaccent{\check}{f}^3 f^3+3 \underaccent{\check}{f}^2 f^4+\underaccent{\check}{f}^2 f^2\\
    &+4 \underaccent{\check}{f} f^3+4 \underaccent{\check}{f} f+4 f^2,
\end{align*}
with
\begin{equation*}
    T_{d\Pone}(\mathcal{J}_{5}^{(1)})=\frac{1}{f}\mathcal{J}_{5}^{(2)},\qquad
    T_{d\Pone}(\mathcal{J}_{5}^{(2)})=f\mathcal{J}_{5}^{(1)}.
\end{equation*}
One may expect factorisations as above to occur in general for odd primes $p$. We leave this line of enquiry for future research.

\section{Conclusion}\label{sec:conclusion}
In \cite{JR2025}, a method was introduced to obtain integrals of motion for $q$-difference Painlev\'e equations from their Lax pairs when $q$ has finite multiplicative order. In this paper, we presented an analogous method for difference and differential Painlev\'e equations in finite characteristic using difference Lax pairs, illustrated for $d\Ptwo$ and $\Pfour$. In particular, our method applies in the context of finite fields and may be useful to study the arithmetic properties and cryptographic potential of Painlev\'e equations.

We found that reducible fibres and singularities on fibres correspond respectively to reductions to Ricatti equations of $\Pfour$ for special parameter values and algebraic solutions of $d\Ptwo$ and $\Pfour$. We further showed that the integrals of motions reduce to integrals for $d\Pone$ under a projective reduction.

Based on the findings for $q\Pone$ in \cite{JR2025}, it is natural to ask whether
 the fibres of the integrals of motion are generally genus one curves also in the current setting. Other avenues for future exploration include the special multiplicative structure in the integrals of motion of $d\Pone$ observed in Section \ref{sec:projred}, as well as application of the method to other Painlev\'e equations and members higher up in Painlev\'e hierarchies.

\appendix

\section{Explicit formulas for integrals of motion}\label{appendix:explicitformulas}

Here are the first few integrals of motion, as defined in Definition \ref{def:integralofmotion}.

\begin{align*}
    \mathcal{I}_2=&+f^4 g^2+f^2 g^4+t^2 f^2 g^2 +
    t f^2 g+t f g^2 +t^2 f g 
    +\alpha_2^2 f^2+f g\\
    &+\alpha_2 t f +\alpha_1^2 g^2+\alpha_1 t g 
    +\alpha_1 \alpha_2,\\
    \mathcal{I}_3=&-f^6 g^3+f^3 g^6-t^3 f^3 g^3
    +t^3 f g+t^2 f^2 g-t^2 f g^2+\alpha_2 t^2 f+\alpha_1 t^2 g\\
    &-\alpha_1 t f g+\alpha_2 t f g-\alpha_2^3 f^3-\alpha_1 f^2 g+\alpha_2 f^2 g-f^2 g+\alpha_1 f g^2-\alpha_2 f g^2- f g^2\\
    &- \alpha_1^3 g^3+\alpha_1\alpha_2  t-\alpha_1\alpha_2  f+\alpha_2^2 f-\alpha_2 f-\alpha_1^2 g+\alpha_1 g+\alpha_1\alpha_2  g,\\
    \mathcal{I}_5=&-f^{10} g^5+f^5 g^{10}-t^5 f^5 g^5+2 t  f^4 g^2+t  f^3 g^3- t ^2 f^3 g^2+2 t  f^2 g^4+t ^2 f^2 g^3\\
    &+2 t ^3 f^2 g^2+t ^4 f^2 g- t ^4 f g^2+t ^5 f g- \alpha _2^5 f^5+3 f^3 g^2- \alpha _2 t  f^3 g+3 f^2 g^3\\
    &- \alpha _1 t  f^2 g^2+\alpha _2 t  f^2 g^2+2 \alpha _1 t ^2 f^2 g+2 \alpha _2 t ^2 f^2 g- t ^2 f^2 g+\alpha _2 t ^4 f+\alpha _1 t  f g^3\\
    &+2 \alpha _1 t ^2 f g^2+2 \alpha _2 t ^2 f g^2- t ^2 f g^2+2 \alpha _1 t ^3 f g+3 \alpha _2 t ^3 f g- \alpha _1^5 g^5+\alpha _1 t ^4 g\\
    &+\alpha _1 \alpha _2 t ^3+2 \alpha _2^2 t  f^2+\alpha _1^2 f^2 g+\alpha _2^2 f^2 g- \alpha _1 f^2 g- \alpha _2 f^2 g+\alpha _1 \alpha _2 f^2 g+3 f^2 g\\
    &+3 \alpha _2^2 t ^2 f- \alpha _2 t ^2 f+2 \alpha _1 \alpha _2 t ^2 f- \alpha _1^2 f g^2- \alpha _2^2 f g^2+\alpha _1 f g^2+\alpha _2 f g^2\\
    &- \alpha _1 \alpha _2 f g^2+2 f g^2+\alpha _1^2 t  f g+\alpha _2^2 t  f g+2 \alpha _1^2 t  g^2+2 \alpha _1^2 t ^2 g+\alpha _1 t ^2 g\\
    &+3 \alpha _1 \alpha _2 t ^2 g
    +3 \alpha _1^2 \alpha _2  t +2  \alpha _1 \alpha _2^2 t +\alpha _2 \alpha _1^2 f+\alpha _2^2 \alpha _1 f-  \alpha _1 \alpha _2 f+\alpha _2^3 f+\alpha _2^2 f\\
    &+3 \alpha _2 f+\alpha _1^3 g+\alpha _1^2 g+\alpha _2 \alpha _1^2 g+ \alpha _1 \alpha _2^2 g+3 \alpha _1 g-  \alpha _1 \alpha _2 g.
\end{align*}

\section{Explicit formulas for coefficient matrices in Lax form}\label{sec:laxcoef}
We have the following explicit formulas for the coefficient matrices appearing in Equations \eqref{eq:piv_laxB} and \eqref{eq:piv_laxC}.
The matrix $B^{(2)}(z)$ is a $2\times 2$ matrix polynomial of degree two in $z$,
\begin{equation*}
    B^{(2)}(z)=\begin{bmatrix}
        1 & 0\\
        w^{-1}(\beta_1-t\,z) & w^{-1}
    \end{bmatrix}
    \begin{bmatrix}
        \beta_2+t\,z & w\\
        \alpha_2+z & 0
    \end{bmatrix},
\end{equation*}
and the inverse of $B^{(1)}(z)$ is a $2\times 2$ matrix polynomial of degree two in $z$,
\begin{equation*}
    B^{(1)}(z)^{-1}=\begin{bmatrix}
        1 & 0\\
        w^{-1}(\beta_3-t\, z) & w^{-1}
    \end{bmatrix}
    \begin{bmatrix}
        \beta_4+t z &  w\\
        z-\alpha_1-1 & 0
    \end{bmatrix},
\end{equation*}
where the $\beta_k$, $1\leq k\leq 4$, are given by
\begin{align*}
    \beta_1&=t(1+f g-\alpha_1)+g(t^2-1)\frac{f g-\alpha_1}{f g+\alpha_2}-t\,g^2  \frac{(f g-\alpha_1)^2}{(f g+\alpha_2)^2},\\
    \beta_2&=t\,f g-g \frac{f g-\alpha_1}{f g+\alpha_2},\\
    \beta_3&=t-t\,f g+f \frac{f g+\alpha_2}{f g-\alpha_1},\\
    \beta_4&=-t(f g+\alpha_2)+(1+t^2)f \frac{f g+\alpha_2}{f g-\alpha_1}+t f^2 \frac{(f g+\alpha_2)^2}{(f g-\alpha_1)^2}.
\end{align*}
The matrix $C(z)$ is a $2\times 2$ matrix polynomial of degree two in $z$,
\begin{equation*}
 C(z)=C_0+z \begin{bmatrix}
     t & 0\\
     \frac{t^2(1-2 f g)}{w} & -t
 \end{bmatrix}+z^2\begin{bmatrix}
     0 & 0\\
     -\frac{t^2}{w} & 0
 \end{bmatrix},
\end{equation*}
with
\begin{equation*}
    C_0=\begin{bmatrix}
        t f g-\frac{t}{2} & w\\
         w^{-1}[fg(t^2 f g+tf+t\,g-t^2-1)+\alpha_1(t\,g-1)-\alpha_2(tf-1)]    & \frac{t}{2}-t f g
    \end{bmatrix}.
\end{equation*}

\section{Proof of a technical lemma}

The goal of this section is to prove Lemma \ref{lem:traceconstant}. It requires two prepatory lemmas. The content of the first is a cyclic sum identity for polynomials over finite fields.
\begin{lemma}\label{lem:polevalsum}
     Given a prime $p$ and a polynomial
     \begin{equation*}
         g(z)=g_0+g_1 z+g_2 z^2+\ldots+g_{d(p-1)}z^{d(p-1)}\in\mathbb{F}_p[z],
     \end{equation*}
     for some $d\geq 0$, the cyclic sum of evaluations of the polynomial at all elements of $\mathbb{F}_p$ equals
     \begin{equation*}
        \sum_{n=0}^{p-1}g(n)=-(g_{p-1}+g_{2p-2}+\ldots+g_{d(p-1)}).
     \end{equation*}
\end{lemma}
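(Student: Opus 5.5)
By linearity of both sides in $g$, it suffices to treat a single monomial $g(z)=z^k$ with $0\le k\le d(p-1)$. So the plan is to compute the power sums
\begin{equation*}
    S_k:=\sum_{n=0}^{p-1}n^k\in\mathbb{F}_p,
\end{equation*}
and then to show that $S_k=-1$ when $k\geq 1$ and $(p-1)\mid k$, and $S_k=0$ otherwise. Here we use the convention $0^0=1$, since the constant term $g_0$ contributes $g_0\cdot p=0$. Once this is established, summing $g_k S_k$ over $k$ gives exactly $-(g_{p-1}+g_{2p-2}+\ldots+g_{d(p-1)})$, as claimed.

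For $k=0$ we get $S_0=p\cdot 1=0$ in $\mathbb{F}_p$. For $k\geq 1$ the term $n=0$ vanishes, so $S_k=\sum_{x\in\mathbb{F}_p^\times}x^k$.

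\emph{Case $(p-1)\mid k$.} By Fermat's little theorem, $x^k=1$ for every $x\in\mathbb{F}_p^\times$. Hence $S_k=p-1=-1$.

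\emph{Case $(p-1)\nmid k$.} Choose a generator $\gamma$ of the cyclic group $\mathbb{F}_p^\times$. Then $\gamma^k\neq 1$. Since multiplication by $\gamma$ permutes $\mathbb{F}_p^\times$, we have $\gamma^kS_k=S_k$, which forces $S_k=0$. If one prefers to avoid primitive roots, the same conclusion follows by noting that $x^k$ takes a value $\neq1$ at some $x$, since the polynomial $X^{\gcd(k,p-1)}-1$ has fewer than $p-1$ roots.

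The main point requiring care is the bookkeeping at the extremes. One must check that $k=0$ is correctly excluded from the right-hand side; it is, because the sum there starts at $g_{p-1}$. One must also check that $k=d(p-1)$ is included, and it is. Beyond this, the argument is elementary, and the case $p=2$ is covered by the same reasoning, since there every $k\geq1$ is divisible by $p-1=1$.
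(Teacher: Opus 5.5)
Your proof is correct and takes the same route as the paper: expand $\sum_{n} g(n)$ by linearity into power sums $\sum_{n=0}^{p-1} n^k$ and evaluate these via the standard congruence, which the paper simply cites and you prove using Fermat's little theorem and a generator of $\mathbb{F}_p^\times$. Your separate treatment of $k=0$, where the sum is $p\equiv 0$ rather than $-1$, is slightly more careful than the paper, whose literal statement of the congruence would assign $-1$ to $k=0$ since $p-1\mid 0$.
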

\begin{proof}
    Note that
    \begin{equation}\label{eq:gsum}
       \sum_{n=0}^{p-1}g(n)=\sum_{k=0}^{d(p-1)} g_k \sum_{n=0}^{p-1}n^k.
    \end{equation}
    We now recall the well-known power sum congruence, see e.g. \cite{macmillansondow},
        \begin{equation*}
        \sum_{n=0}^{p-1}n^k\equiv \begin{cases}
            -1 &\text{if $p-1\mid k$},\\
            0 &\text{if $p-1 \nmid k$},\\
        \end{cases}
    \end{equation*}
            modulo $p$. The lemma follows by applying this identity to each of the inner summations on the right-hand side of equation \eqref{eq:gsum}.
\end{proof}

\begin{lemma}\label{lem:polsum}
     Given a prime $p$ and a polynomial of degree at most $2p-2$,
     \begin{equation*}
         g(z)=g_0+g_1 z+g_2 z^2+\ldots+g_{2(p-1)}z^{2p-2}\in\mathbb{F}_p[z],
     \end{equation*}
      the following cyclic sum equals a constant,
     \begin{equation*}
        \sum_{n=0}^{p-1}g(z+n)=-(g_{p-1}+g_{2p-2}).
     \end{equation*}
\end{lemma}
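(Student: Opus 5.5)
The plan is to expand each shifted polynomial $g(z+n)$ binomially in powers of $n$ and then apply Lemma~\ref{lem:polevalsum} coefficientwise in $z$. Concretely, I would write
\begin{equation*}
\sum_{n=0}^{p-1}g(z+n)=\sum_{k=0}^{2p-2}g_k\sum_{j=0}^{k}\binom{k}{j}z^{k-j}\sum_{n=0}^{p-1}n^{j}.
\end{equation*}
By the power sum congruence used in the proof of Lemma~\ref{lem:polevalsum}, the inner sum $\sum_{n}n^j$ vanishes unless $(p-1)\mid j$ with $j\geq 1$, in which case it equals $-1$. For $j=0$ it equals $p=0$, using the convention $0^0=1$. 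Since $j\leq k\leq 2p-2$, the only surviving values are $j=p-1$ and $j=2p-2$.

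The sum therefore equals
\begin{equation*}
-\sum_{k=p-1}^{2p-2}g_k\binom{k}{p-1}z^{k-p+1}\;-\;g_{2p-2}.
\end{equation*}
The key step is to show that $\binom{k}{p-1}\equiv 0 \pmod p$ for $p\leq k\leq 2p-2$. This follows from Lucas' theorem. Writing $k=p+r$ with $0\leq r\leq p-2$, the base-$p$ digits of $k$ are $(1,r)$ and those of $p-1$ are $(0,p-1)$. Hence
\begin{equation*}
\binom{k}{p-1}\equiv\binom{1}{0}\binom{r}{p-1}=0,
\end{equation*}
because $r<p-1$. Alternatively, one can argue directly: $\binom{k}{p-1}=\frac{k!}{(p-1)!\,(k-p+1)!}$, and the numerator contains exactly one factor $p$ while the denominator contains none.

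Only the term $k=p-1$ survives, and it contributes $-g_{p-1}\binom{p-1}{p-1}z^0=-g_{p-1}$. Adding the $j=2p-2$ contribution gives $-(g_{p-1}+g_{2p-2})$, as claimed.

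The main point to get right is the $j=0$ bookkeeping together with the binomial vanishing. Everything else is a direct reuse of the power sum congruence from Lemma~\ref{lem:polevalsum}. The hypothesis $\deg g\leq 2p-2$ is exactly what keeps $j\leq 2p-2<3(p-1)$ when $p>2$ and prevents higher binomials from contributing. For $p=2$ the degree bound is $2$, and the same computation can be checked directly.
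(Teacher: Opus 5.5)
Your proof is correct, but it takes a different route from the paper. The paper never expands $g(z+n)$. It observes that $G(z)=\sum_{n=0}^{p-1}g(z+n)$ is shift-invariant of degree $<2p$, hence of the form $G_0+G_1z+G_pz^p$ with $G_1+G_p=0$. It then reads off $G_0=G(0)$ from Lemma~\ref{lem:polevalsum} applied to $g$, and $G_1=G'(0)=\sum_n g'(n)$ from the same lemma applied to $g'$; this equals $-p\,g_p=0$, which forces $G_p=0$.

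You instead compute every coefficient of $G$ directly. The structure of shift-invariant polynomials is replaced by the arithmetic input $\binom{k}{p-1}\equiv 0 \pmod p$ for $p\le k\le 2p-2$. The two arguments meet at the coefficient of $z$: your $-\binom{p}{p-1}g_p$ is exactly the paper's $G_1$. The difference is that periodicity lets the paper dispose of the coefficients of $z^2,\dots,z^{p-1}$ without computing them, whereas you eliminate them one by one with Lucas' theorem.

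The paper's version is shorter, since it reuses the periodicity fact already exploited in Lemma~\ref{lem:chi_form}. Yours is fully explicit and shows exactly where the degree bound enters, namely the Lucas step $r\le p-2$; for $k=2p-1$ one would get $\binom{2p-1}{p-1}\equiv 1$ and hence a $z^p$ term. In that light, your closing remark slightly misplaces the role of the hypothesis. The inequality $2p-2<3(p-1)$ holds for every $p\ge 2$, and your computation, Lucas step included, covers $p=2$ verbatim.

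Your explicit handling of $j=0$ is also worthwhile. The power-sum congruence as quoted in the paper's proof of Lemma~\ref{lem:polevalsum} is off at exponent $0$: with the convention $0^0=1$ needed there, the sum is $p\equiv 0$, not $-1$. This does not affect the paper's conclusion.
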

\begin{proof}
Denote
\begin{equation*}
    G(z)=\sum_{n=0}^{p-1}g(z+n),
\end{equation*}
then $G(z)$ is a polynomial of degree at most $2p-2$ that satisfies $G(z+1)=G(z)$. It follows from these two facts that
\begin{equation*}
    G(z)=G_0+G_1 z+G_p z^p,
\end{equation*}
for some $G_0,G_1,G_p\in\mathbb{F}_p$ with $G_1+G_p=0$. We can compute $G_0$ by applying Lemma \ref{lem:polevalsum} to $g(z)$, yielding
\begin{equation*}
    G_0=-(g_{p-1}+g_{2p-2}).
\end{equation*}
Next, to compute $G_1$, we apply Lemma \ref{lem:polevalsum} to $g'(z)$, yielding
\begin{equation*}
    0=\sum_{n=0}^{p-1}g'(z+n)=G'(z)=G_1,
\end{equation*}
so that $G_1=0$. Finally, since $G_1+G_p=0$, we have $G_p=0$ and the lemma follows.
\end{proof}

\begin{proof}[Proof of Lemma \ref{lem:traceconstant}]\label{sec:appendix_proof}
The case $p=2$ is done by direct calculation, so we assume that $p$ is an odd prime.
We prove the statement of the lemma by exploiting the cycling invariance of traces of products of matrices in combination with Lemma \ref{lem:polsum}. For this purpose we introduce the notation $\mathbf{j}=(j_0,\ldots,j_{p-1})\in \{0,1,2\}^p$ for length $p$ tuples of indices and define, for $0\leq d\leq 2p$, the set of indices that sum up to $d$,
  \begin{equation*}
      J_d=\{\mathbf{j}\in \{0,1,2\}^p:j_0+\ldots+j_{p-1}=d\}.
  \end{equation*}
We then have a cyclic action on $J_d$ generated by the permutation
\begin{equation*}
    \sigma: J_d\rightarrow J_d, \mathbf{j}\mapsto (j_1,\ldots,j_{p-1},j_0).
\end{equation*}
Since $p$ is prime, any element under this action is either a fixed point or has an orbit of length $p$.
  
We proceed to decompose the main trace of interest as follows,
    \begin{equation}\label{eq:trace_decomp}
        \operatorname{Tr}\left[B(z+p-1)\cdot B(z+p-2)\cdot\ldots\cdot B(z+1)\cdot B(z)\right]=\sum_{d=0}^{2p} H_d(z),
    \end{equation}
where
\begin{equation*}
   H_d(z)=\sum_{\mathbf{j}\in J_d}g_\mathbf{j}(z) \operatorname{Tr}\left[B_{j_{p-1}}B_{j_{p-2}}\cdot\ldots\cdot B_{j_1-1}B_{j_0}\right],
\end{equation*}
with
\begin{equation*}
    g_\mathbf{j}(z)=(z+p-1)^{j_{p-1}}(z+p-2)^{j_{p-2}}\cdot\ldots\cdot (z+1)^{j_1}(z+0)^{j_0}.
\end{equation*}

To compute $H_d(z)$, we distinguish five cases, $d=2p$, $d=2p-1$, $d=p$, $d=0$ and the remaining values of $d$, that is, $1\leq d\leq 2p-2$ with $d\neq p$.

First, we consider $d=2p$, in which case $J_d=\{(2,2,\ldots,2)\}$ consists of just one element and we have
\begin{align}
    H_{2p}(z)&=(z+p-1)^{2}(z+p-2)^{2}\cdot\ldots\cdot (z+1)^{2}(z+0)^{2}\operatorname{Tr}B_2^p \nonumber\\
    &=(z^p-z)^2 \operatorname{Tr}B_2^p. \label{eq:H2p} 
\end{align}
Similarly, when $d=2p-1$, $J_d$ consists of one cycle of tuples under the action of $\langle \sigma \rangle$,
\begin{equation*}
    J_{2p-1}=\{\sigma^n(1,2,2,\ldots,2):0\leq n\leq p-1\},
\end{equation*}
and, by the cyclic invariance of traces of products of matrices,
\begin{align}
    H_{2p-1}(z)&=(z^p-z)^2\left(\frac{1}{z+p-1}+\frac{1}{z+p-2}+\ldots+\frac{1}{z}\right) \operatorname{Tr}B_2^{p-1}B_1\nonumber\\
    &=(z^p-z) \left(\frac{d}{dz}(z^p-z)\right)\operatorname{Tr}B_2^{p-1}B_1\nonumber\\
    &=-(z^p-z)\operatorname{Tr}B_2^{p-1}B_1. \label{eq:H2pm1}
\end{align}

Next, we consider the case $1\leq d\leq 2p-2$ with $d\neq p$. Then we know that the action of $\sigma$ on $J_d$ has no fixed points. Choosing a set of representatives of the corresponding quotient,
\begin{equation*}
    J_d/\langle \sigma \rangle =\{[\mathbf{j}^{(r)}]:1\leq r \leq R\},
\end{equation*}
where $R=|J_d|/p$, we have, by the cycling invariance of traces of products of matrices,
\begin{equation*}
   H_d(z)=\sum_{r=1}^R\sum_{n=0}^{p-1} g_{\mathbf{j}^{(r)}}(z+n) \operatorname{Tr}\left[B_{j_{p-1}}B_{j_{p-2}}\cdot\ldots\cdot B_{j_1-1}B_{j_0}\right].
\end{equation*}
By Lemma \ref{lem:polsum}, we know that each of the inner sums
\begin{equation*}
    \sum_{n=0}^{p-1} g_{\mathbf{j}^{(r)}}(z+n),
\end{equation*}
is a constant, so that $H_d(z)$ itself is just a constant matrix, that is,
\begin{equation}\label{eq:Hgen}
    H_d(z)=H_d(0),
\end{equation}
for $1\leq d\leq 2p-2$ with $d\neq p$.

When $d=p$, we can repeat the same computation, but we need to consider the special fixed point $\mathbf{j}=(1,1,\dots 1)\in H_p$ separately. This tuple of indices yields the following contribution to $H_p(z)$,
\begin{equation*}
    (z+p-1)(z+p-2)\cdot\ldots\cdot (z+1)(z+0) \operatorname{Tr}B_1^p=(z^p-z)\operatorname{Tr}B_1^p.
\end{equation*}
In other words, we have
\begin{equation}\label{eq:Hp}
    H_p(z)=H_p(0)+(z^p-z)\operatorname{Tr}B_1^p.
\end{equation}

For the remaining case $d=0$, we have
\begin{equation*}
    H_0(z)=\operatorname{Tr}B_0^p,
\end{equation*}
 so that the statement of the lemma follows from combining this equation and equations \eqref{eq:trace_decomp}, \eqref{eq:H2p}, \eqref{eq:H2pm1}, \eqref{eq:Hgen} and  \eqref{eq:Hp}.
 \end{proof}

\end{document}